\definecolor{purple}{rgb}{0.5, 0.0, 0.5}
\definecolor{dark_green}{rgb}{0.0, 0.5, 0.0}
\definecolor{mygray}{gray}{0.6}
\definecolor{orange}{rgb}{1,0.5,0}
\DeclareMathAlphabet{\mathpzc}{OT1}{pzc}{m}{it}
\newcommand{\red}{\color{red}}
\newcommand{\green}{\color{green}}
\newcommand{\cyan}{\color{cyan}}
\newcommand{\white}{\color{white}}
\newcommand{\gray}{\color{mygray}}
\newcommand{\Cc}{\mathcal{C}}
\newcommand{\ow}{\mathcal{O}}
\newcommand{\Hh}{\mathcal{H}}
\newcommand{\I}{\mathcal{I}}
\newcommand{\J}{\mathcal{J}}
\newcommand{\Mb}{\bold{M}}
\newcommand{\Wb}{\bold{W}}
\newcommand{\yb}{\bold{y}}
\newcommand{\kbar}{\bar{k}}
\newcommand{\kt}{\tilde{k}}
\newcommand{\Gb}{\bold{G}}
\newcommand{\Gbb}{\bar{\Gb}}
\newcommand{\Gbt}{\tilde{\bold{G}}}
\newcommand{\Tb}{\bold{T}}
\newcommand{\T}{\mathcal{T}}
\newcommand{\Tbar}{\bar{T}}
\newcommand{\Z}{\mathbb{Z}}
\newcommand{\R}{\mathbb{R}}
\newcommand{\C}{\mathbb{C}}
\newcommand{\F}{\mathbb{F}}
\newcommand{\N}{\mathbb{N}}
\newcommand{\Q}{\mathbb{Q}}
\newcommand{\gb}{\bold{g}}
\newcommand{\pt}{\tilde{p}}
\newcommand{\Pb}{\bold{P}}
\newcommand{\Ab}{\bold{A}}
\newcommand{\Acal}{\mathcal{A}}
\newcommand{\Acalh}{\hat{\Acal}}
\newcommand{\Apzc}{\mathpzc{A}}
\newcommand{\Abt}{\tilde{\bold{A}}}
\newcommand{\bb}{\bold{b}}
\newcommand{\bbh}{\hat{\bold{b}}}
\newcommand{\Bb}{\bold{B}}
\newcommand{\Bbt}{\tilde{\bold{B}}}
\newcommand{\Bcal}{\mathcal{B}}
\newcommand{\Bcalb}{\bar{\mathcal{B}}}
\newcommand{\ab}{\bold{a}}
\newcommand{\Cb}{\bold{C}}
\newcommand{\cb}{\bold{c}}
\newcommand{\Dbt}{\tilde{\bold{D}}}
\newcommand{\Eb}{\bold{E}}
\newcommand{\eb}{\bold{e}}
\newcommand{\Ib}{\bold{I}}
\newcommand{\Hb}{\bold{H}}
\newcommand{\sfsty}[1]{\ensuremath{\mathsf{#1}}}  
\newcommand{\rsa}{\sfsty{RSA}}
\newcommand{\rs}{\sfsty{RS}}
\newcommand{\brs}{\sfsty{BRS}}
\newcommand{\x}{\sfsty{x}}
\DeclarePairedDelimiter\ceil{\lceil}{\rceil}
\DeclarePairedDelimiter\floor{\lfloor}{\rfloor}
\newtheorem{Thm}{Theorem}
\newtheorem{Prop}[Thm]{Proposition}
\newtheorem{Lemma}[Thm]{Lemma}
\newtheorem{Question}[Thm]{Question}
\DeclareMathOperator{\GL}{GL}
\newcommand{\qvec}[1]{\textbf{\textit{#1}}}
\begin{document}

\title{\LARGE \bf Secure Linear MDS Coded Matrix Inversion}

\author{$\textbf{Neophytos Charalambides}^{\mu}$, $\textbf{Mert Pilanci}^{\sigma}$, \textbf{and} $\textbf{Alfred O. Hero III}^{\mu}$\\
$\text{\white.}^{\mu}$EECS Department University of Michigan $\text{\white.}^{\sigma}$EE Department Stanford University\\
  Email: neochara@umich.edu, pilanci@stanford.edu, hero@umich.edu
\thanks{A preliminary version considered fractional repetition codes \cite{CPH20b}. All missing proofs can be found in \cite{CPH22a}, as well as an  encoding matrix illustration. This work was partially supported by grant ARO W911NF-15-1-0479.}
\vspace{-4mm}
}

\maketitle
\thispagestyle{empty}
\pagestyle{empty}


\begin{abstract}
A cumbersome operation in many scientific fields, is inverting large full-rank matrices. In this paper, we propose a coded computing approach for recovering matrix inverse approximations. We first present an approximate matrix inversion algorithm which does not require a matrix factorization, but uses a black-box least squares optimization solver as a subroutine, to give an estimate of the inverse of a real full-rank matrix. We then present a distributed framework for which our algorithm can be implemented, and show how we can leverage sparsest-balanced MDS generator matrices to devise \textit{matrix inversion coded computing schemes}. We focus on balanced Reed-Solomon codes, which are optimal in terms of computational load; and communication from the workers to the master server. We also discuss how our algorithms can be used to distributively compute the pseudoinverse of a full-rank matrix, and how the communication is secured from eavesdroppers.
\end{abstract}


\section{Introduction}
\label{intro}

Inverting a matrix is a common operation in numerous applications in domains such as social networks, numerical analysis and integration, machine learning, and scientific computing \cite{GS59,Hig02}. It is one of the most important operations, as it reverses a system. A common way of inverting a matrix is by performing Gaussian elimination, which in general takes $\ow(N^3)$ operations for square matrices of order $N$. In high-dimensional applications, this is cumbersome.

An operation of equivalent complexity, is multiplying two $N\times N$ matrices. The equivalency can be shown through the Schur complement. There is a plethora of efficient and elegant matrix multiplication algorithms, which imply matrix inversion algorithms. The most popular and practical algorithm; of complexity $\ow(N^{2.807})$, is due to Strassen \cite{Str69}. Many other inversion algorithms assume specific structure on the matrix, require a matrix-matrix product, or use a matrix factorization \cite{TB97}. Methods for matrix inversion or factorization are often referred to as \textit{direct methods}, in contrast to \textit{iterative methods}, which gradually converge to the solution \cite{DRSL16,PV20}. The most computationally efficient direct methods compute some form of the inverse, and are asymptotically equivalent. These have complexity $\ow(N^{\omega})$, for $\omega<2.373$ the \textit{matrix multiplication exponent} \cite{AV20}.

Distributed computations in the presence of \textit{stragglers} (workers who fail to compute their task or have longer response time than others) have gained a lot of attention in the information theory community. Coding-theoretic approaches have been adopted for this \cite{LLPPR17,YSRKSA18}, and fall under the framework of \textit{coded computing} (CC). Data security is also an increasingly important issue in CC \cite{LA20}. Despite the fact that multiplication algorithms imply inversion algorithms and vice versa, in the context of CC; matrix inversion has not been studied as extensively as \textit{coded matrix multiplication} (CMM) \cite{YMAA17}. The main reason for this is the fact that the latter is non-linear as an operator, which prohibits it from being parallelizable.  We point out that distributed inversion algorithms do exist, though these make assumptions on the matrix, are specific for distributed and parallel computing platforms, and require a matrix factorization; or heavy and multiple communication instances. In this work, we give a remedy to this, by approximating the columns of $\Ab^{-1}$, and using an encoding technique which has been leveraged in \textit{gradient coding} (GC) \cite{HASH17} to mitigate stragglers. We do not make any of the aforementioned assumptions.

Recall that the CC network is centralized, and is comprised of a master server who communicates with $n$ workers. The idea behind our approximation is that the workers use a least squares solver to approximate multiple columns of $\Ab^{-1}$. While other iterative procedures are applicable, we present simulation results with steepest descent (SD) and the conjugate gradient method (CG). By locally approximating the columns in this way, the workers can linearly encode the \textit{blocks} of $\widehat{\Ab^{-1}}$.

The non-linearity of matrix inversion prohibits linear or polynomial encoding of the data before the computations are to be carried out. Consequently, most CC approaches cannot be directly utilized. GC is the appropriate CC set up to consider \cite{TLDK17}, precisely because the encoding takes place once the computation which has been carried out, in contrast to most CMM schemes where the encoding is done by the master, before the data is distributed.

Once the workers complete their computations, they encode them by computing a linear combination with coefficients determined by a sparsest-balanced \textit{maximum distance separable} (MDS) generator matrix. This provides: 1) minimum redundancy per job across the network, 2) optimal communication from the workers to the master server. We focus on \textit{balanced Reed-Solomon} ($\brs$) code generator matrices \cite{HLH16,HLH16b}. Once a sufficient number of workers has responded, the master is able to recover the approximation $\widehat{\Ab^{-1}}$. We leverage the structure of sparsest-balanced generator matrices to optimally allocate tasks to the workers, while linear encoding results in minimal communication load from the workers to the master. The ideas discussed above are also extended to distributed approximation of the pseudoinverse $\Ab^{\dagger}$ for $\Ab$ full-rank, through a two or three-round communication CC approach.
We also present how the communication between the master and the workers can be made secure, guaranteeing security against eavesdroppers.

The paper is organized as follows. In Section \ref{prel_sec} we recall basic facts regarding matrix inversion, least squares approximation, and finite fields. In Section \ref{Appr_alg_sec} we present the matrix inverse and pseudoinverse approximation algorithms we utilize in our schemes. The main contribution is presented in Section \ref{Str_pr_sec}. We first review $\brs$ codes and then show how our inversion algorithm can be incorporated in linear CC schemes\footnote{For brevity, we abbreviate `coded computing scheme' to CCS.} derived from MDS sparsest-balanced generator matrices, with a focus on $\brs$ generator matrices. We then discuss how our pseudoinverse algorithm can be carried out distributively. Concluding remarks and future work are presented in Section \ref{concl_sec}.

\subsection{Related Work}

We point out two articles \cite{GHSTM11,YGK17} which have similarities to the matrix inverse approximation approach presented in this paper. Firstly, our approach to inverting $\Ab$ is similar in nature to \cite{GHSTM11}, which uses stochastic gradient descent to approximate matrix factorizations distributively. Secondly, the formulation of our underlying optimization problem: minimize $\|\Ab\Bb-\Ib_N\|_F^2$ by estimating the columns of $\Bb$, is equivalent to the problem studied in \cite{YGK17}, which deals with approximating linear inverse problems in the presence of stragglers. The drawbacks of the CCS provided in \cite{YGK17}, is that it is geared towards specific applications (e.g. personalized PageRank), makes assumptions on the covariance between the signals comprising the linear system and the accuracy of the workers, and assumes an additive decomposition of $\Ab$. Furthermore, the approximation algorithm in \cite{YGK17} is probabilistic, and considers the response of \textit{all} workers, treating stragglers as soft errors instead of erasures. We on the other hand make \textit{no} assumption on $\Ab$ other than the fact that it is non-singular, and our algorithm is not probabilistic.

The CC literature is vast, and has drawn ideas from many fields, e.g. graph theory, information theory, and optimization. We briefly discuss the most similar coding approaches to the one we propose, i.e. polynomial based codes.

Polynomial codes date back to 1960, with the invention of Reed-Solomon ($\rs$) codes \cite{RS60}. Variants of these codes have found application in many fields, and are still an active research area. In CC, polynomial codes have been used to devise CMM \cite{FC19,FJHDCG17,DFHJCG19,YA20,KD22}, as well as GC schemes \cite{HASH17,CPH20a}.

To multiply matrices $\Ab$ and $\Cb$, the ``MatDot'' CMM scheme \cite{FJHDCG17,DFHJCG19} uses an evaluation of a matrix polynomial as an encoding, whose coefficients are outer-products of the columns and rows of $\Ab$ and $\Cb$ respectively. Once a sufficient number of evaluations are sent back to the master server, she can apply a polynomial interpolation algorithm or $\rs$ decoding, in order to recover the coefficient which is equal to the product $\Ab\Cb$. The polynomial codes proposed in \cite{YMAA17,YA20} instead encode blocks of the rows and columns of $\Ab$ and $\Cb$ respectively. The workers then compute the product of the encodings they receive and send it back. Once sufficiently many jobs are received, an inversion of a Vandermonde matrix suffices for the decoding step.

The GC scheme from \cite{HASH17} is based on $\brs$ codes. The main difference to our work, is that in GC the objective is to construct an encoding matrix $\Gb$ and decoding vectors $\ab_\I\in\C^{k}$, such that $\ab_{\I}^\top\Gb=\vec{\bold{1}}$ for any set of non-straggling workers $\I$. The way $\brs$ codes are exploited in \cite{HASH17} is that we have the decomposition $\Gb_{\I}=\Hb_{\I}\Pb$, for $\Hb_{\I}$ a Vandermonde matrix, and the first row of $\Pb$ is equal to $\vec{\bold{1}}$. Therefore, $\ab_{\I}^\top$ is the first row of $\Hb_{\I}^{-1}$. The matrix subscripts $\I$, denote the submatrices of $\Gb$ and $\Hb$, consisting only of the rows indexed by $\I$. Further details on this CCS and how it differs from ours can be found in \cite{CPH22a} Appendix 2.

The state-of-the art CC framework is ``Lagrange Coded Computing'' (LCC), which is used to compute any arbitrary multivariate polynomial of a given dataset \cite{YSRKSA18,SMA21}. LCC is based on Lagrange interpolation, and it achieves the optimal trade-off between resiliency, security, and privacy. The problem we are considering is not a multivariate polynomial in terms of $\Ab$. To securely communicate $\Ab$ to the workers, we encode it through Lagrange interpolation. Though similar ideas appear in LCC, the purpose and application of the interpolation is different. Furthermore, LCC is a \textit{point-based} approach \cite{KD22} and requires additional interpolation and linear combination steps after the decoding takes place.

\section{Preliminary Background}
\label{prel_sec}

The set of $N\times N$ non-singular matrices is denoted by $\GL_N(\R)$. Recall that $\Ab\in\GL_N(\R)$ has a unique inverse $\Ab^{-1}$, such that $\Ab\Ab^{-1}=\Ib_N$. The simplest way of computing $\Ab^{-1}$ is by performing Gaussian elimination on $\big[\Ab|\Ib_N\big]$, which gives $\big[\Ib_N\big|\Ab^{-1}]$ in $\ow(N^3)$ operations. In Algorithm \ref{inv_alg}, we approximate $\Ab^{-1}$ column-by-column. We denote the $i^{th}$ row and column of $\Ab$ respectively by $\Ab_{(i)}$ and $\Ab^{(i)}$.

For full-rank rectangular matrices $\Ab\in\R^{N\times M}$ where $N>M$, one resorts to the left Moore–Penrose pseudoinverse $\Ab^{\dagger}\in\R^{M\times N}$, for which $\Ab^{\dagger}\Ab=\Ib_M$. In Algorithm \ref{pinv_alg}, we present how to approximate the left pseudoinverse of $\Ab$, by using the fact that $\Ab^{\dagger}=(\Ab^\top\Ab)^{-1}\Ab^\top$; since $\Ab^\top\Ab\in\GL_N(\R)$. The right pseudoinverse $\Ab^{\dagger}=\Ab^\top(\Ab\Ab^\top)^{-1}$ of $\Ab\in\R^{M\times N}$ where $M<N$, can be obtained by a modification of Algorithm \ref{pinv_alg}.

In the proposed algorithms we approximate $N$ instances of the least squares minimization problem
\begin{equation}
\label{OLS}  
  \theta^{\star}_{ls} = \arg\min_{\theta\in\R^M} \left\{\|\Ab\theta-\yb\|_2^2\right\} 
\end{equation}
for $\Ab\in\R^{N\times M}$ and $\yb\in\R^N$. In many applications $N\gg M$, where the rows represent the feature vectors of a dataset. This has the closed-form solution $\theta^{\star}_{ls} = \Ab^{\dagger}\yb$.

Computing $\Ab^{\dagger}$ to solve \eqref{OLS} directly is intractable for large $M$, as it requires computing the inverse of $\Ab^\top\Ab$. Instead, we use gradient methods to get \textit{approximate} solutions, e.g. SD or CG, which require less operations, and can be done distributively. One could use second-order methods; e.g. Newton–Raphson, Gauss-Newton, Quasi-Newton, BFGS, or Krylov subspace methods instead. This is not the focus of our work. We denote the iteration count of these methods with a superscript $[t]$, for $t=1,2,3,...$ .

Our schemes are defined over the finite field of $q$ elements, $\F_q$. We denote its cyclic multiplicative subgroup by $\F_q^{\times}=\F_q\backslash\{0_{\F_q}\}$. For implementation purposes, we identify finite fields with their realization in $\C$ as a subgroup of the circle group, since we assume our data is over $\R$. All operations can therefore be carried out over $\C$. Specifically, for $\beta\in\F_q^{\times}$ a generator, we identify $\beta^j$ with $e^{2\pi ij/q}$, and $0_{\F_q}$ with $1$. The set of integers between $1$ and $\nu$ is denoted by $\N_\nu$.

\subsection{Balanced Reed-Solomon Codes}
\label{BRS_intro}

A Reed-Solomon code $\rs_q[n,k]$ over $\F_q$ for $q>n>k$, is the encoding of polynomials of degree at most $k-1$, for $k$ the message length and $n$ the code length. It represents our message over the \textit{defining set of points} $\Apzc=\{\alpha_i\}_{i=1}^n\subset\F_q$
\begin{align*}
  \rs_q[n,k]=\Big\{\big[f&(\alpha_1),f(\alpha_2),\cdots,f(\alpha_n)\big] \ \Big|\\
  & f(X)\in\F_q[X] \text{ of degree }\leqslant k-1 \Big\}
\end{align*}
where $\alpha_i=\alpha^i$, for $\alpha$ a primitive root of $\F_q$. Hence, each $\alpha_i$ is distinct. A natural interpretation of $\rs_q[n,k]$ is through its encoding map. Each message $(m_0,...,m_{k-1})\in\F_q^k$ is interpreted as $f(\x)=\sum_{i=0}^{k-1}m_i\x^i\in\F_q[\x]$, and $f$ is evaluated at each point of $\Apzc$. From this, $\rs_q[n,k]$ can be defined through a generator matrix
$$ \Gb = \begin{pmatrix} 1 & \alpha_1 & \alpha_1^2 & \hdots & \alpha_1^{k-1} \\ 1 & \alpha_2 & \alpha_2^2 & \hdots & \alpha_2^{k-1} \\ \vdots & \vdots & \vdots & \ddots & \vdots \\ 1 & \alpha_n & \alpha_n^2 & \hdots & \alpha_n^{k-1} \end{pmatrix} \in \F_q^{n\times k} , $$
thus, $\rs$ codes are linear codes over $\F_q$. Furthermore, they obtain the Singleton bound, i.e. $d=n-k+1$ where $d$ is the code's distance, which means they are MDS.

Balanced Reed-Solomon codes \cite{HLH16,HLH16b} are a family of linear MDS error-correcting codes with generator matrices $\Gb$ that are:
\begin{itemize}
  \item \textbf{sparsest}: each \textit{row} has the least possible number of nonzero entries
  \item \textbf{balanced}: each \textit{column} contains the same number of nonzero entries
\end{itemize}
for the given code parameters $k$ and $n$. The design of these generators are suitable for our purposes, as:
\begin{enumerate}
  \item we have balanced loads across homogeneous workers,
  \item sparse generator matrices means we have reduced computational tasks across the network,
  \item the MDS property permits an efficient decoding step,
  \item linear codes produce a compressed representation of the encoded blocks.
\end{enumerate}

\section{Approximation Algorithms}
\label{Appr_alg_sec}

\subsection{Proposed Inverse Algorithm}
\label{pr_inv_alg_subsec}

Our goal is to estimate $\Ab^{-1}=\big[\bb_1 \ \cdots \ \bb_N \big]$, for $\Ab$ a square matrix of order $N$. A key property to note is
$$ \Ab\Ab^{-1} = \Ab\big[\bb_1 \ \cdots \ \bb_N \big] = \big[\Ab\bb_1 \ \cdots \ \Ab\bb_N \big] = \bold{I}_N $$
which implies that $\Ab\bb_i=\eb_i$ for all $i\in\N_N$, where $\eb_i$ are the standard basis column vectors. Assume for now that we use any black-box least squares solver to estimate
\begin{equation}
\label{inv_LS}
  \hat{\bb}_i \approx \arg\min_{\bb\in\R^N} \Big{\{f_i(\bb)\coloneqq\|\Ab\bb-\eb_i\|_2^2}\Big\}
\end{equation}
which we call $N$ times, to recover $\widehat{\Ab^{-1}} \coloneqq \big[ \hat{\bb}_1 \ \cdots \ \hat{\bb}_N \big]$. This approach may be viewed as solving
\begin{equation*}
\label{inv_LS_F}
  \widehat{\Ab^{-1}} \approx \arg\min_{\ \ \Bb\in\R^{N\times N}}\left\{\|\Ab\Bb-\bold{I}_N\|_F^2\right\}.
\end{equation*}
Alternatively, one could estimate the rows of $\Ab^{-1}$. Algorithm \ref{inv_alg} shows how this can be performed by a single server.

\begin{algorithm}[h]
\label{inv_alg}
\SetAlgoLined
\KwIn{$\Ab\in\GL_N(\R)$}
  \For{i=1 to N}
  {
    approximate $\hat{\bb}_i \approx \arg\min_{\bb\in\R^N} \left\{\|\Ab\bb-\eb_i\|_2^2\right\}$ 
  }
 \Return $\widehat{\Ab^{-1}} \gets \big[ \hat{\bb}_1 \ \cdots \ \hat{\bb}_N \big]$
 \caption{Estimating $\Ab^{-1}$}
\end{algorithm}

In the case where SD is used to approximate $\hat{\bb}_i$ from \eqref{inv_LS}, the overall operation count is $\ow(\T_i N^2)$; for $\T_i$ the total number of descent iterations used. An upper bound on the number of iterations can be determined by the underlying termination criterion, e.g. the criterion $f_i(\bbh^{[t]})-f_i(\bb^{\star})\leqslant\epsilon$ is guaranteed to be satisfied after $\T=\ow(\log(1/\epsilon))$ iterations \cite{BV04}. The overall error of $\widehat{\Ab^{-1}}$ may be quantified as
\begin{itemize}
  \item $\text{err}_{\ell_2}(\widehat{\Ab^{-1}}) \coloneqq \|\widehat{\Ab^{-1}}-\Ab^{-1}\|_2$
  \item $\text{err}_F(\widehat{\Ab^{-1}}) \coloneqq \|\widehat{\Ab^{-1}}-\Ab^{-1}\|_F$
  \item $\text{err}_{\text{r}F}(\widehat{\Ab^{-1}}) \coloneqq \frac{\|\widehat{\Ab^{-1}}-\Ab^{-1}\|_F}{\|\Ab^{-1}\|_F} = \frac{\sum\limits_{i=1}^{N} \|\Ab\hat{\bb}_i-\eb_i\|_2}{\|\Ab^{-1}\|_F}$
\end{itemize}
which we refer to as the \textit{$\ell_2$-error}, \textit{Frobenius-error} and \textit{relative Frobenius-error} respectively. The corresponding pseudoinverse approximation errors are defined accordingly.

To compute $\widehat{\Ab^{-1}}$ distributively, each of the $n$ servers are asked to estimate $T$-many $\bbh_i$'s in parallel. When using SD, the worst-case runtime by the workers is $\ow(T\cdot\T_{\text{max}}N^2)$, for $\T_{\text{max}}$ the maximum number of iterations of SD among the workers. If CG is used, each worker needs no more than $NT$ CG steps to exactly compute its task; i.e. $\ow\big(TN\frac{\sigma_{\max}(\Ab)}{\sigma_{\min}(\Ab)}\big)$ operations, which is the worst case runtime \cite{She94,TB97}.

Bounds on $\text{err}_F(\widehat{\Ab^{-1}})$ and $\text{err}_{\text{r}F}(\widehat{\Ab^{-1}})$ can be established for both algorithms, specific to the black-box least squares solver being utilized. This is left for future work.

\subsection{Proposed Pseudoinverse Algorithm}
\label{pr_pinv_alg_subsec}

Similar to the inverse, the pseudoinverse of a matrix also appears in a variety of applications. Computing the pseudoinverse of $\Ab\in\R^{N\times M}$ for $N>M$ is even more cumbersome, as it requires inverting the Gram matrix $\Ab^\top\Ab$. For this subsection, we consider a full-rank rectangular matrix $\Ab$.

One could naively attempt to modify Algorithm \ref{inv_alg} in order to retrieve $\widehat{\Ab^{\dagger}}$ such that $\widehat{\Ab^{\dagger}}\Ab\approx\Ib_M$, by approximating the rows of $\Ab^{\dagger}$. This would \textit{not} work, as the underlying optimization problems would not be strictly convex. Instead, we use Algorithm \ref{pinv_alg} to estimate the rows of $\Bb^{-1}\coloneqq(\Ab^\top\Ab)^{-1}$, and then multiply the estimate $\widehat{\Bb^{-1}}$ by $\Ab^\top$. This gives us the approximation $\widehat{\Ab^{\dagger}}\coloneqq\widehat{\Bb^{-1}}\cdot\Ab^\top$.

The drawback of Algorithm \ref{pinv_alg} is that it requires two additional matrix multiplications, $\Ab^\top\Ab$ and $\widehat{\Bb^{-1}}\Ab^\top$. We overcome this barrier by using a CMM scheme twice, to recover $\widehat{\Ab^{\dagger}}$ in a two or three-round communication CC approach. These are discussed in \ref{distr_Pseudinv_subsec}.

\begin{algorithm}[h]
\label{pinv_alg}
\SetAlgoLined
\KwIn{full-rank $\Ab\in\R^{N\times M}$ where $N>M$}
  $\Bb\gets \Ab^\top\Ab$\\
  \For{i=1 to M}
  {
    $\hat{\cb}_i \approx \arg\min_{\cb\in\R^{1\times M}} \Big\{g_i(\cb)\coloneqq\|\cb\Bb-\eb_i^\top\|_2^2\Big\}$\\ 
    $\hat{\bb}_i\gets \hat{\cb}_i\cdot\Ab^\top$
  }
 \Return $\widehat{\Ab^{\dagger}} \gets \left[ \hat{\bb}_1^\top \ \cdots \ \hat{\bb}_M^\top \right]^\top$ \Comment{$\widehat{\Ab^{\dagger}}_{(i)}=\bbh_i$}
 \caption{Estimating $\Ab^{\dagger}$}
\end{algorithm}

\vspace{-4mm}
\subsection{Numerical Experiments}
\label{exper_sec}

The accuracy of the proposed algorithms was tested on randomly generated matrices, using both SD and CG \cite{TB97} for the subroutine optimization problems. The depicted results are averages of 20 runs, with termination criteria $\|\nabla f_i(\bb^{[t]})\|_2\leqslant \epsilon$ for SD and $\|\bb_i^{[t]}-\bb_i^{[t-1]}\|_2\leqslant \epsilon$ for CG, for the given $\epsilon$ accuracy parameters. The criteria for $\widehat{\Ab^{\dagger}}$ were analogous. We considered $\Ab\in \R^{100\times 100}$ and $\Ab\in\R^{100\times 50}$. The error subscripts represent $\mathscr{A}=\{\ell_2,{F},\text{r}F\}$, $\mathscr{N}=\{\ell_2,F\}$, $\mathscr{F}=\{F,\text{r}F\}$. We note that significantly fewer iterations took place when CG was used for the same $\epsilon$, though this depends heavily on the choice of the step-size. Thus, there is a trade-off between accuracy and speed when using SD vs. CG, for such termination criteria. 

\begin{center}
\begin{tabular}{ |p{.5cm}||p{1.05cm}|p{1.1cm}|p{1.1cm}|p{1.1cm}|p{1.1cm}| }
\hline
\multicolumn{6}{|c|}{Average $\widehat{\Ab^{-1}}$ errors, for $\Ab\sim50\cdot \mathcal{N}(0,1)$ --- SD} \\
\hline
$\epsilon$ & $10^{-1}$ & $10^{-2}$ & $10^{-3}$ & $10^{-4}$ & $10^{-5}$ \\
\hline
$\text{err}_{\mathscr{A}}$ & {\small$\ow(10^{-2})$} & {\small$\ow(10^{-5})$} & {\small$\ow(10^{-7})$} & {\small$\ow(10^{-9})$} & {\small$\ow(10^{-12})$} \\
\hline
\end{tabular}
\end{center}

\begin{center}
\begin{tabular}{ |p{.5cm}||p{1.05cm}|p{1.05cm}|p{1.05cm}|p{1.1cm}|p{1.1cm}| }
\hline
\multicolumn{6}{|c|}{{\small Average $\widehat{\Ab^{-1}}$ errors, for $\Ab\sim50\cdot \mathcal{N}(0,1)$ --- CG}} \\
\hline
$\epsilon$ & $10^{-3}$ & $10^{-4}$ & $10^{-5}$ & $10^{-6}$ & $10^{-7}$ \\
\hline
$\text{err}_{\mathscr{N}}$ & {\small$\ow(10^{-3})$} & {\small$\ow(10^{-5})$} & {\small$\ow(10^{-8})$} & {\small$\ow(10^{-11})$} & {\small$\ow(10^{-12})$} \\
$\text{err}_{\text{r}F}$ & {\small$\ow(10^{-3})$} & {\small$\ow(10^{-5})$} & {\small$\ow(10^{-7})$} & {\small$\ow(10^{-10})$} & {\small$\ow(10^{-12})$} \\
\hline
\end{tabular}
\end{center}

\begin{center}
\begin{tabular}{ |p{.5cm}||p{1.05cm}|p{1.05cm}|p{1.05cm}|p{1.15cm}|p{1.15cm}| }
\hline
\multicolumn{6}{|c|}{Average $\widehat{\Ab^{\dagger}}$ errors, for $\Ab\sim \mathcal{N}(0,1)$ --- SD} \\
\hline
$\epsilon$ & $10^{-1}$ & $10^{-2}$ & $10^{-3}$ & $10^{-4}$ & $10^{-5}$ \\
\hline
$\text{err}_{\ell_2}$ & {\small$\ow(10^{-4})$} & {\small$\ow(10^{-6})$} & {\small$\ow(10^{-8})$} & {\small$\ow(10^{-10})$} & {\small$\ow(10^{-12})$} \\
$\text{err}_{\mathscr{F}}$ & {\small$\ow(10^{-5})$} & {\small$\ow(10^{-7})$} & {\small$\ow(10^{-9})$} & {\small$\ow(10^{-11})$} & {\small$\ow(10^{-13})$} \\
\hline
\end{tabular}
\end{center}

\begin{center}
\begin{tabular}{ |p{.5cm}||p{1.05cm}|p{1.05cm}|p{1.05cm}|p{1.15cm}|p{1.15cm}| }
\hline
\multicolumn{6}{|c|}{Average $\widehat{\Ab^{\dagger}}$ errors, for $\Ab\sim \mathcal{N}(0,1)$ --- CG} \\
\hline
$\epsilon$ & $10^{-3}$ & $10^{-4}$ & $10^{-5}$ & $10^{-6}$ & $10^{-7}$ \\
\hline
$\text{err}_{\ell_2}$ & {\small$\ow(10^{-4})$} & {\small$\ow(10^{-6})$} & {\small$\ow(10^{-8})$} & {\small$\ow(10^{-10})$} & {\small$\ow(10^{-12})$} \\
$\text{err}_{\mathscr{F}}$ & {\small$\ow(10^{-2})$} & {\small$\ow(10^{-3})$} & {\small$\ow(10^{-8})$} & {\small$\ow(10^{-10})$} & {\small$\ow(10^{-12})$} \\
\hline
\end{tabular}
\end{center}

\section{Coded Matrix Inversion}
\label{Str_pr_sec}  

In this section, we focus on CC and give a linear scheme based on $\brs$ codes \cite{HLH16,HLH16b,HASH17} which makes Algorithms \ref{inv_alg} and \ref{pinv_alg} resilient to stragglers. We present the proposed scheme for Algorithm \ref{inv_alg}, and then show how to combine Polynomial CMM \cite{YMAA17}; to distributively perform Algorithm \ref{pinv_alg}. While there is extensive literature on matrix-matrix, matrix-vector multiplication, and computing the gradient in the presence of stragglers, there is limited work on computing or approximating the inverse of a matrix \cite{YGK17}. 

First, we argue why all of $\Ab$ needs to be known by each of the workers, in order to recover entries or columns of its inverse. We then show how Lagrange interpolation can be utilized to securely share $\Ab$ among the workers. We then discuss what are the computational tasks the workers are requested to compute, which are blocks of $\widehat{\Ab^{-1}}$; and correspond to the subroutine problems of Algorithms \ref{inv_alg}, \ref{pinv_alg}.

Then, we briefly review $\brs$ codes, how the workers encode their computations in our proposed CCS, and how the master then decodes the received computations. Optimality of $\brs$ generator matrices in terms of allocated tasks and encoded communication loads are also established, in Lemma \ref{lem_sol_opt_prob} and in \ref{opt_BRS_subs} respectively.

We note that when assuming finite-point arithmetic, the CCS we propose introduces no numerical nor approximation errors. The approximation in our procedure, is a consequence of using iterative solvers to estimate \eqref{inv_LS}. Therefore, if the workers can recover the optimal solutions to the underlying minimization problems, our approach would be \textit{exact}.

\subsection{Encrypting and Communicating $\Ab$}
\label{enc_A_subs}

A bottleneck when computing the inverse of a matrix; or estimating its columns, is that the entire matrix needs to be known. A single change in the matrix's entries may result in a non-singular matrix. Below, we illustrate a simple such example. If we change $\Ab_{2,3}$ of $\Ab$ for which $\text{rank}(\Ab)=3$:
\begin{equation}
\label{perturbed_example}
  \Ab = \begin{pmatrix} 8 & 2 & 5\\ 2 & 2 & \textit{\cyan 5}\\ 3 & 7 & 5 \end{pmatrix} \quad \leadsto \quad \qvec{A}^{\bullet} = \begin{pmatrix} 8 & 2 & 5\\ 2 & 2 & \textit{\cyan 2}\\ 3 & 7 & 5 \end{pmatrix}
\end{equation}
we get $\qvec{A}^{\bullet}$ for which $\text{rank}(\qvec{A}^{\bullet})=2$. This conveys how sensitive Gaussian elimination is \cite{ColNotes}.

In the case where only one column is not known, one can determine the subspace in which the missing column lies in, but without the knowledge of at least one entry of that column, it would be impossible to recover that column. Even with such an approach or a matrix completion algorithm, the entire $\Ab$ is determined before we proceed to inverting $\Ab$, or performing linear regression to solve $\Ab\bb=\eb_i$. Problems similar to the one illustrated in \eqref{perturbed_example} are extensively studied in conditioning and stability of numerical analysis \cite{TB97}, and in perturbation theory. This is not a focus of our work.

Furthermore, by the data processing inequality \cite[Corollary pg.35]{CT06}, the above imply that no less than $N^2$ information symbols can be delivered to each worker, while hoping to approximate a column of $\Ab^{-1}$, if no assumption is to be made on the structure of $\Ab$. Hence, we cannot deliver a representation of $\Ab$ with less than $N^2$ symbols. This is a consequence of the fact that a dense vector is not recoverable from underdetermined linear measurements. We can however send an encoded version of $\Ab$ to the workers consisting of $N^2$ symbols, determined by a modified Lagrange polynomial, which guarantees security against eavesdroppers.

Similar cryptographic protocols date back to Shamir's secret sharing scheme \cite{Sha79}, which is also based on $\rs$ codes. More recently, this idea has extensively been exploited in LCC \cite{YSRKSA18}. The way it is used in LCC differs from ours, as we need knowledge of the entire matrix $\Ab$.

Let $k$ be a positive factor of $N$ and $T=\frac{N}{k}$.\footnote{If $k\nmid N$, append $\bold{0}_{N\times 1}$ to the end of the first $\kt=\text{rem}(N,k)$ blocks which are each comprised of $\tilde{T}=\floor{\frac{N}{k}}$ columns of $\Ab$, while the remaining $k-\kt$ blocks are comprised of $\tilde{T}+1$ columns. Now, each block is of size $N\times(\tilde{T}+1)$.} Select a set of distinct \textit{interpolation points} $\Bcal=\{\beta_j\}_{j=1}^n\subsetneq \F_q^{\times}$, for $q>n$.\footnote{For the encoding of $\Ab$, $k$ points suffice, and we only need to require $q>k$. We select $\Bcal$ of cardinality $n$ and require $q>n$, in order to also use $\Bcal$ in our CCS.} To construct this set, sample $\beta\in\F_q^{\times}$; any one of the $\phi(q-1)$ primitive roots of $\F_q$ ($\phi$ is Euler's totient function), which is a generator of the multiplicative group $(\F_q^{\times},\cdot)$, and define each point as $\beta_j=\beta^j$. We then generate a random multiset $\Hh=\{\eta_j\}_{j=1}^k\in2^{\F_q^{\times}}$ of size $k$, i.e. repetitions in $\Hh$ are allowed, which we will use to remove the structure of the Lagrange coefficients, as the adversaries could use them to reveal $\beta$.

The element $\beta$ and set $\Hh^{-1}\coloneqq\{\eta_j^{-1}\}_{j=1}^k$, are broadcasted securely to all the workers through a public-key cryptosystem, e.g. $\rsa$. Matrix $\Ab$ is then partitioned into $k$ blocks
\begin{equation}
\label{parts_A}
  \Ab=\Big[\Ab_1 \ \cdots \ \Ab_k\Big] \ \ \text{ where } \Ab_i\in\R^{N\times T},\ \forall i\in\N_k.
\end{equation}
Next, $\Ab$ is encoded through the univariate polynomial
\begin{equation}
\label{lagr_pol_matr}
  f(\x) = \sum\limits_{j=1}^k\Ab_j\cdot\eta_j\left(\prod\limits_{l\neq j}\frac{\x-\beta_l}{\beta_j-\beta_l}\right)
\end{equation}
for which $f(\beta_j)=\eta_j\Ab_j$. This is then shared with the workers, who recover $\Ab$ as follows:
$$ \Ab=\Big[\eta_1^{-1}f(\beta_1) \ \cdots \ \eta_k^{-1}f(\beta_k)\Big]\in\R^{N\times N}. $$
The coefficients of $f(\x)$ are comprised of $NT$ symbols, thus, the polynomial consists of a total of $N^2$ symbols.

\begin{Prop}
\label{prop_sec_cryptosystem}
The encryption of $\Ab$ through $f(\x)$, is as secure against eavesdroppers as the public-key cryptosystem which was used to broadcast $\beta$ and $\Hh^{-1}$.
\end{Prop}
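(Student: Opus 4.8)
The plan is to argue by a reduction in both directions, showing that an eavesdropper who can recover $\Ab$ from what it observes can break the public-key cryptosystem, and conversely, so that the two problems have the same hardness. The eavesdropper's view consists of exactly two objects: the coefficient vector of the polynomial $f(\x)$ of \eqref{lagr_pol_matr} (the $N^2$ symbols broadcast in the clear to the workers), together with the ciphertext of the pair $(\beta,\Hh^{-1})$ produced by the public-key scheme (e.g.\ $\rsa$). Everything else in the protocol is derived deterministically from these and from public indexing data.

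First I would dispose of the easy direction: the encryption of $\Ab$ is no more secure than the cryptosystem. Indeed, if $(\beta,\Hh^{-1})$ is revealed, the eavesdropper forms the interpolation points $\beta_j=\beta^j$ and applies the published recovery rule $\Ab=\big[\eta_1^{-1}f(\beta_1)\ \cdots\ \eta_k^{-1}f(\beta_k)\big]$, since the coefficients of $f$ --- hence the value $f(\beta_j)$ at any point --- are already in hand.

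For the converse --- the encryption is no less secure than the cryptosystem --- I would show that the coefficients of $f(\x)$ alone carry no usable information about $\Ab$ without knowledge of $(\beta,\Hh)$, so that any successful attack on the encoding must implicitly extract information about $(\beta,\Hh^{-1})$, which can only have leaked through the ciphertext. The structural points to make are: (i) $f$ is pinned down by, and pins down, the scaled blocks $\{\eta_j\Ab_j\}_{j=1}^k$ only \emph{after} the evaluation multiset $\{\beta^j\}_{j=1}^k$ is fixed; (ii) since $\beta$ ranges over the $\phi(q-1)$ primitive roots of $\F_q$ and $\Hh$ is drawn uniformly from $(\F_q^{\times})^k$, every candidate key $(\beta',\Hh')$ yields a distinct and internally consistent candidate $\Ab'=\big[(\eta_1')^{-1}f((\beta')^1)\ \cdots\ (\eta_k')^{-1}f((\beta')^k)\big]$; and (iii) the random multiset $\Hh$ is exactly what destroys the algebraic structure of the Lagrange coefficients that an adversary could otherwise exploit to recover $\beta$, as noted after \eqref{lagr_pol_matr}. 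An algorithm that outputs nontrivial information about $\Ab$ from the eavesdropper's view can therefore be converted into one that outputs nontrivial information about $(\beta,\Hh^{-1})$ from its ciphertext, contradicting the security of the cryptosystem; combining the two directions yields the claimed equivalence.

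The main obstacle is making steps (ii)--(iii) rigorous: the statement that $f$'s coefficients ``reveal nothing about $\Ab$'' is not an information-theoretic one-time-pad claim, because the coefficient string has the same length $N^2$ as $\Ab$, so for a fixed key the map $\Ab\mapsto f$ is essentially a bijection, and the number of keys, $\phi(q-1)\cdot(q-1)^k$, is far smaller than the number of admissible matrices $\Ab$. Hence an adversary with strong prior structure on $\Ab$ (integrality, sparsity, bounded entries) could in principle sift the small key-consistent candidate set. The honest reading of the proposition --- and the one I would adopt --- is conditional on the standing assumption of the paper that $\Ab$ is an arbitrary nonsingular real matrix with no exploitable structure: under that assumption the only shortcut from $f$ to $\Ab$ is through the key, and the key is protected solely by the public-key cryptosystem. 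A fully formal treatment would require phrasing this as a semantic-security game with an explicit hybrid argument, which I regard as beyond the scope of the statement.
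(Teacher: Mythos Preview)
Your proposal is correct and in fact considerably more careful than the paper's own argument. The paper's proof consists of a single short paragraph: it assumes for contradiction that the adversary reverses the encoding of $f(\x)$, asserts that this implies the adversary must have revealed $\beta$ and $\Hh^{-1}$, and concludes that this contradicts the security of the cryptosystem. That is essentially your ``converse'' direction, stated as a bare assertion without the structural justification you supply in points (i)--(iii), and without the easy direction at all.

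What your approach buys is an honest accounting of what the claim actually means. You correctly flag that the map $\Ab\mapsto f$ is length-preserving and hence not a one-time pad, so the security is computational and conditional on the absence of exploitable structure in $\Ab$; the paper's proof glosses over this entirely by writing ``This implies that he or she was able to reveal $\beta$ and $\Hh^{-1}$'' as though it were self-evident. Your bidirectional reduction also makes explicit that ``as secure as'' is an equivalence, whereas the paper only argues the nontrivial inequality. The one thing the paper's proof has going for it is brevity appropriate to the informal level at which the proposition is stated; your caveat that a fully formal semantic-security game is out of scope is the right place to stop.
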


For an additional security layer, the interpolation points of $\Bcal$ could instead be defined as $\beta_{j}=\beta^{\pi(j)}$, for $\pi\in S_n$ a random permutation. In this case, $\pi^{-1}$ also needs to be securely broadcasted, so that the workers can determine $\Bcal$.

\subsection{Computational Tasks}

For Algorithms \ref{inv_alg} and \ref{pinv_alg}, any CCS in which the workers compute an encoding of partitions of the resulting computation $\Eb=\big[E_1 \ \cdots \ E_k \big]$ could be utilized. It is crucial that the encoding takes place on the computed tasks $\{E_i\}_{i=1}^k$ in the scheme, and \textit{not} the assigned data or partitions of the matrices that are being computed over (e.g. \cite{YA20}), otherwise the algorithms could potentially not return the correct results. This also means that utilizing such encryption approaches (e.g. \cite{YSRKSA18}) for guaranteeing security against the workers, is not an option. Such schemes leverage the linearity of matrix multiplication. We face these restrictions due to the fact that matrix inversion is a non-linear operator.

The computation tasks $E_i$ correspond to a partitioning $ \widehat{\Ab^{-1}} = \big[\Acalh_1 \ \cdots \ \Acalh_k \big]$, of our approximation from Algorithm \ref{inv_alg}. We propose a linear encoding of the computed blocks $\{\Acalh_i\}_{i=1}^k$ in \ref{BRS_CC_subs}. Along with the proposed decoding step, we have a MDS CCS for matrix inversion.

We consider the same parameters as in \ref{enc_A_subs}, in order to reuse $\Bcal$ in our CCS. Each $\Acalh_i$ is comprised of $T$ distinct but consecutive approximations of \eqref{inv_LS}, i.e.
$$ \Acalh_i = \big[\bbh_{(i-1)T+1}\ \cdots \ \bbh_{iT}\big]\in\R^{N\times T} \quad \forall i\in\N_k, $$
which could also be approximated by iteratively solving
$$ \Acalh_i \approx \arg\hspace{-3mm}\min_{\ \ \Bb\in\R^{N\times T}} \hspace{-1mm} \left\{\left\|\Ab\Bb-\big[\eb_{(i-1)T+1}\ \cdots \ \eb_{iT}\big]\right\|_F^2\right\}. $$

We assume the workers are \textit{homogeneous}, i.e. they have the same computational power. Therefore, equal computational loads are assigned to each of them. Without loss of generality, we assume that the workers use the same algorithms and parameters for estimating the columns $\{\bbh_i\}_{i=1}^N$. Therefore, workers allocated the same tasks are expected to get equal approximations in the same amount of time.

\subsection{Balanced Reed-Solomon Codes for CC}
\label{BRS_CC_subs}

Recall that we leverage $\brs$ generator matrices for our CC inversion scheme. For simplicity, we will consider the case where $d=s+1=\frac{nw}{k}$ is a positive integer\footnote{The case where $\frac{nw}{k}\in\Q_+\backslash\Z_+$ is analysed in \cite{HASH17}, and also applies to our approach. We restrict our discussion to the case where $\frac{nw}{k}\in\Z_+$.}, for $n$ the number of workers and $s$ the number of stragglers. Furthermore, $d$ is the distance of the code and $\|\Gb^{(j)}\|_0=d$ for all $j\in\N_k$; $\|\Gb_{(i)}\|_0=w$ for all $i\in\N_n$, and $d>w$ since $n>k$. For decoding purposes, we require that at least $k=n-s$ workers respond. Consequently, $d=s+1$ implies that $n-d=k-1$. For simplicity, we also assume $d\geqslant n/2$.

For conventional reasons we use the transpose of $\brs$ generator matrices, so from here on we consider such generator matrices $\Gb\in\F_q^{n\times k}$. In our setting, each column of $\Gb$ corresponds to a computational task of $\widehat{\Ab^{-1}}$; i.e. a block $\Acalh_i$, and each row corresponds to a worker.

Our choice of such a generator matrix $\Gb\in\F_q^{n\times k}$ solves the minimization problem
\begin{equation}
\label{min_G_problem}
\begin{aligned}
\arg\min_{\Gb\in\F_q^{n\times k}} \quad & \big\{\text{nnzr}(\Gb)\big\}\\
\textrm{s.t.} \quad & \|\Gb_{(i)}\|_0=w,\ \forall i\in\N_n\\
  & \|\Gb^{(j)}\|_0=d,\ \forall i\in\N_k\\
  & \text{rank}(\Gb_\I)=k,\ \forall \I:|\I|=k\\
\end{aligned}
\end{equation}
which determines an optimal task allocation among the workers of our CCS.

Under the above assumptions, the entries of the generator matrix of a $\brs_q[n,k]$ code meet the following:
\begin{itemize}
  \item each \textit{column} is sparsest, with exactly $d$ nonzero entries
  \item each \textit{row} is balanced, with $w=\frac{dk}{n}$ nonzero entries
\end{itemize}
where $d$ equals to the number of workers who are tasked to compute each block, and $w$ is the number of blocks which are computed by each worker.

Each column $\Gb^{(j)}$ corresponds to a polynomial $p_j(\x)$, whose entries are the evaluation of the polynomial at each of the points of the defining set $\Apzc$ defined in \ref{BRS_intro}, i.e. $\Gb_{ij}=p_j(\alpha_i)$ for $(i,j)\in\N_n\times\N_k$. To construct the polynomials $\{p_j(\x)\}_{i=1}^k$, for which $\text{deg}(p_j)\leqslant\text{nnzr}(\Gb^{(j)})=n-d=k-1$, we first need to determine a sparsest and balanced \textit{mask matrix} $\Mb\in\{0,1\}^{n\times k}$, which is $\rho$-sparse for $\rho=\frac{d}{n}$; i.e. $\text{nnzr}(\Gb)=\rho nk$. It is fairly easy to construct such matrices, by using the Gale-Ryser Theorem \cite{DSDY13,Kra96}. Furthermore, deterministic constructions resemble generator matrices of cyclic codes.

For our purposes we use $\Bcal$ as our defining set of points, where each point corresponds to the worker with the same index. The objective now is to devise the polynomials $p_j(\x)$, for which $p_j(\beta_i)=0$ if and only if $\Mb_{ij}=0$. Therefore:
\begin{enumerate}[label=(\roman*)]
  \item $\Mb_{ij}=0 \quad \implies \quad (\x-\beta_i)\mid p_j(\x)$
  \item $\Mb_{ij}\neq0 \quad \implies \quad p_j(\beta_i)\in\F_q^{\times}$
\end{enumerate}
for all pairs $(i,j)$.

The construction of $\brs[n,k]_q$ from \cite{HLH16} is based on what the authors called \textit{scaled polynomials}. Below, we summarize the construction given in \cite{HASH17}, which is based on Lagrange interpolation. We then prove a simple but important fact about it, which allows us to perform our decoding step.

The univariate polynomials corresponding to each column $\Gb^{(j)}$, are defined as:
\begin{equation}
\label{lagr_polys}
  p_j(\x) \coloneqq \prod\limits_{i:\Mb_{ij}=0}\left(\frac{\x-\beta_i}{\beta_j-\beta_i}\right) = \sum\limits_{\iota=1}^{k}p_{j,\iota}\cdot\x^{\iota-1}\in\F_q[\x]
\end{equation}
which satisfy (i) and (ii). By the sparsity parameters of $\Mb$ and the BCH bound \cite[Chapter 9]{Mc01}, it follows that $\text{deg}(p_j)\geqslant n-d=k-1$ for all $j\in\N_k$. Since each $p_j(\x)$ is the product of $n-d$ monomials, we conclude that the bound on the degree is satisfied and met with equality, hence $p_{j,\iota}\in\F_q^{\times}$ for all coefficients. 

By the construction of $\Gb$, both $\Gb$ and $\Gb_{\I}$ are decomposable into a Vandermonde matrix $\Hb\in\Bcal^{n\times k}$ and a matrix comprised of the polynomial coefficients $\Hb\in(\F_q^{\times})^{k\times k}$ \cite{HASH17}. Specifically, $\Gb=\Hb\Pb$ where $\Hb_{ij}=\beta_i^{j-1}=\beta^{i(j-1)}$ and $\Pb_{ij}=p_{j,i}$ are the coefficients from \eqref{lagr_polys}. This can be interpreted as $\Pb^{(j)}$ defining the polynomial $p_j(\x)$, and $\Hb_{(i)}$ is comprised of the first $k$ positive powers of $\beta_i$ in ascending order, therefore
\vspace{-2mm}
$$ p_j(\beta_i) = \sum\limits_{\iota=1}^{k}p_{j,\iota}\cdot \beta_i^{\iota-1} = \langle\Hb_{(i)},\Pb^{(j)}\rangle . $$

\begin{Lemma}
\label{inverse_lem}
The restriction $\Gb_{\I}\in\F_q^{k\times k}$ of $\Gb$ to any of its $k$ rows indexed by $\I\subsetneq\N_n$, is invertible. Moreover, its inverse can be computed online in $\ow(k^2+k^{\omega})$ operations.\footnote{Recall that $\omega<2.373$ is the matrix multiplication exponent.}
\end{Lemma}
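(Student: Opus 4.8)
The plan is to base everything on the factorization $\Gb=\Hb\Pb$ recalled just before the statement, where $\Hb_{ij}=\beta_i^{\,j-1}$ is Vandermonde and $\Pb=(p_{j,i})$ collects the coefficients of the polynomials in \eqref{lagr_polys}. Invertibility of $\Gb_\I$ itself is immediate: $\brs_q[n,k]$ is an MDS code, so every $k$ rows of its (transposed) generator matrix $\Gb$ are linearly independent, i.e. $\Gb_\I\in\F_q^{k\times k}$ is nonsingular for every $\I$ with $|\I|=k$. The real content is the online cost, for which I would use that restriction to the rows $\I$ commutes with left multiplication, so $\Gb_\I=\Hb_\I\Pb$ and hence $\Gb_\I^{-1}=\Pb^{-1}\Hb_\I^{-1}$.

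Next I would analyse the two factors. The rows of $\Hb_\I$ are $(1,\beta_i,\dots,\beta_i^{k-1})$, $i\in\I$, so $\Hb_\I$ is a square Vandermonde matrix in the nodes $\{\beta_i\}_{i\in\I}$; these are pairwise distinct because $\beta$ is a primitive root of $\F_q$ and $q>n$, so $\Hb_\I$ is invertible, and --- the point I would stress --- its inverse is computable in $\ow(k^2)$ field operations via the classical closed form for the inverse of a Vandermonde matrix (compute the master polynomial $\prod_m(\x-\beta_m)$ and the denominators once, then read off the $k$ Lagrange-basis coefficient vectors by synthetic division, all within $\ow(k^2)$; equivalently the entries are built from elementary symmetric functions of the nodes). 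As for $\Pb$: it is invertible since, for any fixed $\I_0$ of size $k$, $\Pb=\Hb_{\I_0}^{-1}\Gb_{\I_0}$ is a product of invertible matrices (equivalently, $\Gb=\Hb\Pb$ has full column rank $k$, forcing $\mathrm{rank}\,\Pb=k$); but what matters for the complexity is that $\Pb$ --- and therefore $\Pb^{-1}$ --- is determined entirely by the code parameters $n,k$ and the mask $\Mb$, with no dependence on which workers respond, so $\Pb^{-1}$ is precomputed offline and does not count toward the online budget.

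Putting the pieces together, once the responsive set $\I$ is revealed the online computation consists only of: (a) inverting $\Hb_\I$ in $\ow(k^2)$; and (b) multiplying the precomputed $\Pb^{-1}$ by $\Hb_\I^{-1}$ in $\ow(k^{\omega})$. This gives $\Gb_\I^{-1}$ in $\ow(k^2+k^{\omega})$ operations, as claimed. I do not foresee a substantive obstacle here; the only details needing care are verifying that the nodes $\{\beta_i\}_{i\in\I}$ are distinct (so that the Vandermonde factor is both nonsingular and cheaply invertible) and making explicit that $\Pb^{-1}$ is an offline quantity, so that the online work is precisely one Vandermonde inversion followed by one $k\times k$ matrix multiplication.
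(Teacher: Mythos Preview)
Your proposal is correct and follows essentially the same approach as the paper: both exploit the factorization $\Gb_\I=\Hb_\I\Pb$, invert the Vandermonde factor $\Hb_\I$ in $\ow(k^2)$, treat $\Pb^{-1}$ as a precomputed offline quantity, and perform one $k\times k$ matrix multiplication online for $\ow(k^\omega)$. The only minor difference is in justifying that $\Pb$ is invertible --- you deduce it from the MDS property of $\Gb$ together with the invertibility of $\Hb_{\I_0}$, whereas the paper invokes the BCH bound and a lemma from \cite{HLH16} directly; both arguments are valid and the overall structure is the same.
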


\begin{Lemma}
\label{lem_sol_opt_prob}
The generator matrix $\Gb\in\F_q^{n\times k}$ of a $\brs_q[n,k]$ MDS code defined by the polynomials $p_j(\x)$ of \eqref{lagr_polys}, solves the optimization problem \eqref{min_G_problem}.
\end{Lemma}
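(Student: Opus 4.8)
The plan is to verify that the $\brs_q[n,k]$ generator matrix $\Gb$ as constructed via the polynomials $p_j(\x)$ in \eqref{lagr_polys} is feasible for \eqref{min_G_problem}, and then that it attains the minimum objective value. Feasibility splits into three checks. First, the column-sparsity constraint $\|\Gb^{(j)}\|_0 = d$: by design, $p_j(\beta_i) = 0$ exactly when $\Mb_{ij} = 0$, and $\Mb$ was chosen $\rho$-sparse with $\rho = d/n$, so each column of $\Mb$—and hence of $\Gb$—has exactly $n - (n-d) = d$ nonzero entries; I would also note $p_j(\beta_i) \in \F_q^\times$ whenever $\Mb_{ij} \neq 0$, so no "accidental" zeros occur, which is where the factored form of $p_j$ together with $\beta_j - \beta_i \neq 0$ is used. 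Second, the row-balance constraint $\|\Gb_{(i)}\|_0 = w$: this follows because $\Mb$ is balanced with row weight $w = dk/n$ (the mask matrix was constructed to be sparsest and balanced via Gale–Ryser), and $\Gb$ inherits the exact zero pattern of $\Mb$. Third, the MDS / rank constraint $\text{rank}(\Gb_\I) = k$ for every $\I$ with $|\I| = k$: this is precisely Lemma \ref{inverse_lem}, so I would simply invoke it.

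Having established feasibility, the second half is optimality: I must show no feasible $\Gb'$ achieves $\text{nnzr}(\Gb') < \text{nnzr}(\Gb) = \rho n k = dk$. Here I would argue from the MDS constraint. If $\text{rank}(\Gb'_\I) = k$ for all size-$k$ subsets $\I$, then in particular no $k$ rows of $\Gb'$ can have a common zero column-coordinate that would drop the rank; more usefully, an MDS code of dimension $k$ and length $n$ has minimum distance $d = n - k + 1$, and a standard fact (e.g. via the Singleton-type/BCH argument already cited, \cite[Chapter 9]{Mc01}) is that every nonzero codeword—in particular every column of the generator matrix viewed appropriately, or rather every row of $\Gb'^\top$—has weight at least $d$. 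Concretely: each column $\Gb'^{(j)}$ corresponds to a codeword of the code generated by $\Gb'^\top$'s rows... I need to be careful with the transpose convention the paper adopted. The cleaner route: since any $k$ rows of $\Gb'$ are linearly independent, no column $\Gb'^{(j)}$ can have more than $n - k = d - 1$ zeros (otherwise those $\geq k$ rows where that coordinate vanishes would lie in a hyperplane, contradicting full rank on that subset — I'd spell this out: if column $j$ is zero in $k$ rows indexed by $\I$, then $\Gb'_\I$ has a zero column and is singular). Hence every column has at least $d$ nonzeros, giving $\text{nnzr}(\Gb') \geq dk$, with equality forced exactly when every column has exactly $d$ nonzeros — which $\Gb$ achieves. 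Therefore $\Gb$ is a minimizer.

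I expect the main obstacle to be bookkeeping around the transpose convention (the paper uses the transpose of the usual $\brs$ generator matrix, so "column of $\Gb$" is "row of the classical generator", and the roles of $d$ versus $w$ and of $n$ versus $k$ must be tracked consistently) together with making the column-weight lower bound argument airtight: I want to show that the rank constraint alone forces $\|\Gb^{(j)}\|_0 \geq d = n - k + 1$ for every $j$, independently of the balance constraint, and that the balance constraint then pins down which feasible points are optimal but does not further reduce the objective. A secondary subtlety is confirming that $\text{nnzr}$ in \eqref{min_G_problem} is genuinely being minimized subject to all three constraints simultaneously, i.e. that the sparsest-and-balanced mask $\Mb$ is consistent with an MDS realization — but this is exactly what the existence of $\brs$ codes \cite{HLH16,HLH16b} guarantees, so I would cite that rather than reprove it. The write-up is then short: cite Lemma \ref{inverse_lem} for the rank constraint, check the two zero-pattern constraints against $\Mb$, and give the one-paragraph column-weight lower bound for optimality.
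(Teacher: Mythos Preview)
Your proposal is correct. Feasibility is handled the same way as in the paper (zero pattern inherited from $\Mb$, rank via the invertibility of $\Gb_\I$); the paper cites the MDS Theorem for the rank constraint where you cite Lemma~\ref{inverse_lem}, but these are equivalent here.

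The only real difference is the optimality argument. Since the constraints in \eqref{min_G_problem} are \emph{equalities} $\|\Gb_{(i)}\|_0=w$ and $\|\Gb^{(j)}\|_0=d$, every feasible matrix automatically has $\text{nnzr}(\Gb)=nw=kd$; the paper simply records this and checks that the $\brs$ generator attains it. You instead derive the lower bound $\|\Gb'^{(j)}\|_0\geqslant n-k+1=d$ from the rank constraint alone (a zero column in some $\Gb'_\I$ would kill invertibility), which is more work than the problem as stated requires, but is a genuinely stronger observation: it shows the column-weight constraint in \eqref{min_G_problem} is not an independent restriction but is forced by the MDS property. Your concern about the transpose convention is well-placed but your pigeonhole argument on columns of $\Gb'$ is clean and sidesteps it entirely, so there is no actual obstacle there.
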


Lemma \ref{inverse_lem} implies that as long as $k$ workers respond, the approximation $\widehat{\Ab^{-1}}$ is recoverable. Moreover, the decoding step reduces to a matrix multiplication of $k\times k$ matrices. Applying $\Hb_\I$ to a square matrix can be done in $\ow(k^2\log k)$ through the FFT algorithm. The prevailing computation in our decoding, is applying $\Pb^{-1}$.

\subsection{Coded Matrix Inversion Scheme}

For our CCS, we utilize $\brs$ generator matrices for both the encoding and decoding steps. We adapt the GC framework, so we need an analogous condition to $\ab_{\I}^\top\Gb=\vec{\bold{1}}$ for \textit{coded matrix inversion}; in order to invoke Algorithm \ref{inv_alg}. The condition we require is $\Dbt_\I\Gbt=\Ib_N$, for an encoding-decoding pair $(\Gbt,\Dbt_\I)$.

From our discussion on $\brs$ codes, we set $\Gbt=\Ib_T\otimes\Gb$ and $\Dbt_\I=\Ib_T\otimes(\Gb_\I)^{-1}$ for any given set of $k$ responsive workers indexed by $\I$. The index set of blocks requested from the $\iota^{th}$ worker to compute is denoted by $\J_\iota$, and has cardinality $w$. The encoding steps correspond to
\begin{equation}
\label{enc_identity}
  \Gbt\cdot(\widehat{\Ab^{-1}})^\top = (\Ib_T\otimes\Gb)\cdot\begin{bmatrix} \Acalh_1^\top \\ \vdots \\ \Acalh_k^\top \end{bmatrix} = \begin{pmatrix} \sum\limits_{j\in\J_1} p_j(\beta_1)\cdot\Acalh_j^\top \\ \vdots \\ \sum\limits_{j\in\J_n} p_j(\beta_n)\cdot\Acalh_j^\top \end{pmatrix}
\end{equation}
which are carried out locally by the servers, once they have computed their assigned tasks. We denote the encoding of the $\iota^{th}$ worker by $\Wb_\iota\in\C^{T\times N}$, i.e. $\Wb_\iota = \sum\limits_{j\in\J_\iota} p_j(\beta_\iota)\cdot\Acalh_j^\top$.

The received encoded computations by any distinct $k=n-s$ workers indexed by $\I$, constitute $\Gbt_\I\cdot(\widehat{\Ab^{-1}})^\top$. The decoding step is
{\small
\begin{align*}
  \Dbt_\I\cdot\left(\Gbt_\I\cdot(\widehat{\Ab^{-1}})^\top\right) &= \big(\Ib_T\otimes(\Gb_\I)^{-1}\big) \cdot\big(\Ib_T\otimes\Gb_\I\big)\cdot(\widehat{\Ab^{-1}})^\top\\
  &= (\Ib_T\cdot\Ib_T)\otimes\left((\Gb_\I)^{-1}\cdot\Gb_\I\right)\cdot(\widehat{\Ab^{-1}})^\top \\
  &= \Ib_T\otimes\Ib_k\cdot(\widehat{\Ab^{-1}})^\top\\
  &= (\widehat{\Ab^{-1}})^\top
\end{align*}
}
and our scheme is valid.

The above CCS therefore has a linear encoding done locally by the workers \eqref{enc_identity}, is MDS since $s=d-1$, and its decoding step reduces to computing and applying $\Gb_\I^{-1}$ (Lemma \ref{inverse_lem}). It is worth mentioning that with the above framework, any sparsest-balanced generator MDS matrix \cite{DSDY13} would suffice, as long as it satisfies the MDS theorem \cite{LX04}. By Lemma \ref{inverse_lem}, if we set $k=\Omega(\sqrt{N})$ (similar to \cite{YMAA17}), the decoding step could then be done in $\ow(N^{\omega/2})=o(N^{1.187})$, which is close to being linear in terms of $N$.

\begin{Thm}
\label{MDS_CC_thm}
Let $\Gb\in\F^{n\times k}$ be a generator matrix of any MDS code over $\F$, for which $\|\Gb^{(j)}\|_0=n-k+1$ and $\|\Gb_{(i)}\|_0=w$ for all $(i,j)\in\N_n\times\N_k$. By utilizing Algorithm \ref{inv_alg}, we can devise a linear MDS coded matrix inversion scheme; through the encoding-decoding pair $(\Gbt,\Dbt_\I)$.
\end{Thm}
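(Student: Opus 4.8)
The plan is to repeat the $\brs$ construction of the preceding subsection essentially verbatim, after isolating which properties of the generator matrix it actually uses and verifying that they reduce to the three hypotheses of the statement. Concretely, I would set $\Gbt = \Ib_T \otimes \Gb \in \F^{nT\times N}$ and, for any $\I\subsetneq\N_n$ with $|\I|=k$, $\Dbt_\I = \Ib_T \otimes (\Gb_\I)^{-1}$, where $\Gb_\I\in\F^{k\times k}$ collects the rows of $\Gb$ indexed by $\I$ and $N=Tk$. Task allocation is read off the support of $\Gb$: worker $\iota$ computes the blocks $\{\Acalh_j : j\in\J_\iota\}$ with $\J_\iota=\{j\in\N_k : \Gb_{\iota j}\neq 0\}$, so the balanced-row hypothesis $\|\Gb_{(\iota)}\|_0=w$ gives each of the $n$ homogeneous workers the same load $w$, while the sparsest-column hypothesis $\|\Gb^{(j)}\|_0=n-k+1$ means every block is assigned to exactly $n-k+1$ workers, i.e.\ the scheme tolerates $s=n-k$ stragglers while requiring $k=n-s$ responses.

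The first step is that $\Dbt_\I$ is well defined, i.e.\ $\Gb_\I$ is invertible for every $\I$ of size $k$. In the transpose convention used here, the defining property of a generator matrix of an $[n,k]$ MDS code is exactly that every $k$ of its $n$ rows are linearly independent (a dependence among $k$ rows would produce a nonzero codeword vanishing on those $k$ coordinates, hence of weight $\leqslant n-k < d$, contradicting $d=n-k+1$); thus $\Gb_\I\in\GL_k(\F)$. For the $\brs$ case this is the content of Lemma \ref{inverse_lem}, which additionally bounds the online cost of inverting $\Gb_\I$; here only existence is needed, which holds for any MDS code over any $\F$.

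Next I would instantiate the distributed execution of Algorithm \ref{inv_alg} as in \ref{BRS_CC_subs}: each worker obtains $\Ab$ (via the encrypted share of Section \ref{enc_A_subs}), approximates its $w$ blocks, each $\Acalh_j = \arg\min_{\Bb\in\R^{N\times T}} \|\Ab\Bb - [\eb_{(j-1)T+1}\ \cdots\ \eb_{jT}]\|_F^2$ with a black-box least squares solver, and applies the local linear encoding $\Wb_\iota = \sum_{j\in\J_\iota} \Gb_{\iota j}\,\Acalh_j^\top \in \C^{T\times N}$. By the homogeneity assumption — identical solver and parameters, hence identical $\Acalh_j$ for every worker assigned block $j$ — the computations received from any responsive set $\I$ assemble into $\Gbt_\I\,(\widehat{\Ab^{-1}})^\top$ with $\Gbt_\I=\Ib_T\otimes\Gb_\I$. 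Applying the decoder and the Kronecker mixed-product identity,
\begin{equation*}
\Dbt_\I\big(\Gbt_\I\,(\widehat{\Ab^{-1}})^\top\big) = \big(\Ib_T \otimes (\Gb_\I)^{-1}\Gb_\I\big)(\widehat{\Ab^{-1}})^\top = (\Ib_T \otimes \Ib_k)(\widehat{\Ab^{-1}})^\top = \Ib_N\,(\widehat{\Ab^{-1}})^\top,
\end{equation*}
so transposing recovers $\widehat{\Ab^{-1}}$. This shows $(\Gbt,\Dbt_\I)$ yields a valid scheme whose encoding is linear and local and whose straggler tolerance matches the MDS distance, which is the assertion.

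I do not anticipate a real obstacle: the argument is a re-packaging of the $\brs$ scheme, and the single structural input — invertibility of every $k\times k$ row-submatrix of an MDS generator matrix — is standard. The one subtlety worth stating carefully is that the linear combination is applied to the \emph{computed blocks} $\Acalh_j$ rather than to the distributed data, so correctness depends on the workers producing mutually consistent approximations of each block; this is precisely where homogeneity and the ``same algorithm and parameters'' convention are used, and it is why a generic CMM-style encoding cannot be substituted.
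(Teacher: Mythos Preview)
Your proposal is correct and follows essentially the same approach as the paper's proof: define $\Gbt=\Ib_T\otimes\Gb$ and $\Dbt_\I=\Ib_T\otimes(\Gb_\I)^{-1}$, invoke the MDS property to guarantee invertibility of $\Gb_\I$ for every $\I$ of size $k$, and verify $\Dbt_\I\Gbt_\I=\Ib_N$ via the Kronecker mixed-product identity. The only cosmetic difference is that the paper cites the MDS Theorem (Theorem~\ref{MDS_thm}) directly for invertibility of $\Gb_\I$, whereas you spell out the weight argument; your additional remarks on task allocation, straggler tolerance, and the role of the homogeneity assumption are accurate elaborations but not part of the paper's own proof.
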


Other constructions, based on cyclic MDS codes, can also be considered. These have also been leveraged to devise GC schemes \cite{RTTD17}. The corresponding encoding matrices are suitable when the network is comprised of \textit{heterogeneous} workers, as they are not sparsest-balanced.

\begin{Prop}
\label{cyclic_MDS_prop}
Any cyclic $[n,k]$ MDS code $\Cc$ over $\F\in\{\R,\C\}$ can be used to devise a coded matrix inversion encoding-decoding pair $(\Gbt,\Dbt_\I)$.
\end{Prop}

Furthermore, we can guarantee security of the communicated encodings between the workers and the master server, if we do not reveal which encoding corresponds to each worker. This is equivalent to keeping the workers' indices secret.

Observe that the rows of $\Gb$ are partitioned into $\tau=\ceil{\frac{n}{n-d}}\leqslant k$ groups of rows with the same support. By our threshold requirement that at least $k$ workers respond, the pigeonhole principle implies that at least one encoding from each of the $\tau$ groups is received. Assume the eavesdropper has knowledge of $\tau$ encoded computations; one from each group, but does not know which encoding corresponds to which group. There are a total of $\tau!$ possibilities, each of which results in a different $\Gb_\I$. This corresponds to a $\Gb$ with randomly permuted rows, $\Gb^{\text{perm}}$. Without knowledge of $\I$ and the permutation, it is then not possible to reverse the encoding $\Gbt^{\text{perm}}_\I\cdot(\widehat{\Ab^{-1}})^\top$, unless the eavesdropper exhaustively tries all $n!\tau!$ possible cases. Even in such a case, it will not know which is the correct $\widehat{\Ab^{-1}}$.

\subsection{Optimality of MDS $\brs$ Codes}
\label{opt_BRS_subs}

Under the assumption that $k=n-s$, by utilizing the $\brs_q[n,k]$ generator matrices, we achieved the minimum possible communication load from the workers to the master. From our discussion in \ref{enc_A_subs}, we cannot hope to receive an encoding of size less than $N^2/k$ when we require that $k$ workers respond with the same amount of information symbols in order to recover $\Ab^{-1}\in\R^{N\times N}$, unless we make further assumptions on the structure of $\Ab$ and $\Ab^{-1}$. Each encoding $\Wb_\iota$ consists of $NT=N^2/k$, so we have achieved the lower bound on the minimum amount of information needed to be sent to the main server by the workers. This also holds true for other generator matrices which can be used in Theorem \ref{MDS_CC_thm}, as the encodings are linear. Hence, $\Wb_\iota\in\C^{T\times N}$ for any sparsest-balanced generator MDS matrix.

We also require the workers to estimate the least possible number of columns for the given recovery threshold $k$. For our choice of parameters, the bound of \cite[Theorem 1]{TLDK17} is met with equality. That is, for all $i\in\N_n$:
$$ \|\Gb_{(i)}\|_0 = w = \frac{k}{n}\cdot d = \frac{k}{n}\cdot(n-k+1)\ , $$
which means that for homogeneous workers, we cannot get a sparser generator matrix. This, along with the requirement that $\Gb_\I$ should be invertible for all possible $\I$, are what we considered in \eqref{min_G_problem}.


\subsection{Pseudoinverse from Polynomial CMM}
\label{distr_Pseudinv_subsec}

One approach to leverage Algorithm \ref{pinv_alg} in a two-round communication scheme is to first compute $\Bb=\Ab^\top\Ab$ through a CMM scheme, then share $\Bb$ with all the workers who estimate the rows of $\widehat{\Bb^{-1}}$, and finally use another CMM to locally encode the estimated columns with blocks of $\Ab^\top$; to recover $\widehat{\Ab^{\dagger}}=\widehat{\Bb^{-1}}\cdot\Ab^\top$. Even though there are only two rounds of communication, the fact that we have a local encoding by the workers results in a higher communication load overall. An alternative approach which circumvents this issue, uses three-rounds of communication.

For this approach, we use polynomial CMM \cite{YMAA17} twice, along with our coded matrix inversion scheme. This CMM has a reduced communication load, and minimal computation is required by the workers. To have a consistent recovery threshold across our communication rounds, we partition $\Ab$ as in \eqref{parts_A} into $\kbar=\sqrt{n-s}=\sqrt{k}$ blocks. Each block is of size $N\times\Tbar$, for $\Tbar=\frac{M}{k}$. The encodings from \cite{YMAA17} of the partitions $\{\Ab_j\}_{j=1}^{\kbar}$ for carefully selected parameters $a,b\in\Z_+$ and distinct elements $\gamma_i\in\F_q$, are
\vspace{-2mm}
$$ \Abt^a_i=\sum_{j=1}^k\Ab_j\gamma_i^{(j-1)a} \quad \text{ and } \quad \Abt^b_i=\sum_{j=1}^k\Ab_j\gamma_i^{(j-1)b} $$
for each worker indexed by $i$. Thus, each encoding is comprised of $N\Tbar$ symbols. The workers compute the product of their respective encodings $(\Abt^a_i)^\top\cdot\Abt^b_i$. The decoding step corresponds to an interpolation step, which is achievable when $\kbar^2=k$ many workers respond\footnote{We select $\kbar=\sqrt{k}$ in the partitioning of $\Ab$ in \eqref{parts_A} when deploying this CMM, to attain the same recovery threshold as our inversion scheme.}, which is the optimal recovery threshold for CMM. Any fast polynomial interpolation or $\rs$ decoding algorithm can be used for this step, to recover $\Bb$.

Next, the master shares $\Bb$ with all the workers (from \ref{enc_A_subs}, this is necessary), who are requested to estimate the \textit{column-blocks} of $\widehat{\Bb^{-1}}$
\begin{equation}
\label{parts_Binv}
  \widehat{\Bb^{-1}}=\Big[\Bcalb_1 \ \cdots \ \Bcalb_k\Big] \ \ \text{ where } \Bcalb_j\in\R^{M\times\Tbar}\ \forall j\in\N_k
\end{equation}
according to Algorithm \ref{inv_alg}. We can then recover $\widehat{\Bb^{-1}}$ by our $\brs$ based scheme, once $k$ workers send their encoding.

For the final round, we encode $\widehat{\Bb^{-1}}$ as
\vspace{-2mm}
$$ \Bbt^a_i=\sum_{j=1}^k\Bcalb_j\gamma_i^{(j-1)a} $$
which are sent to the respective workers. The workers already have in their possession the encodings $\Abt^b_i$. We then carry out the polynomial CMM where each worker is requested to send back $(\Bbt^a_i)^\top\cdot\Abt^b_i$. The master server can then recover $\widehat{\Ab^{\dagger}}$.

\begin{Thm}
\label{MDS_CC_psinv_thm}
Consider $\Gb\in\F^{n\times k}$ as in Theorem \ref{MDS_CC_thm}. By using any CMM, we can devise a matrix pseudoinverse CCS by utilizing Algorithm \ref{pinv_alg}, in two-rounds of communication. By using polynomial CMM \cite{YMAA17}, we achieve this with a reduced communication load and minimal computation, in three-rounds of communication.
\end{Thm}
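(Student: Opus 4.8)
The plan is to verify that the construction of~\ref{distr_Pseudinv_subsec} realises, in each mode, a correct straggler-resilient scheme; in every round correctness is inherited from a component already in hand (an off-the-shelf CMM, or the coded inversion of Theorem~\ref{MDS_CC_thm}), so the real work is only to check that the quantities handed between rounds chain exactly as Algorithm~\ref{pinv_alg} prescribes, namely $\widehat{\Ab^{\dagger}}=\widehat{\Bb^{-1}}\,\Ab^\top$ with $\Bb\coloneqq\Ab^\top\Ab\in\GL_M(\R)$.

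\emph{Two-round mode.} Round one runs an arbitrary CMM with operands $\Ab^\top$ and $\Ab$; by correctness of that CMM the master recovers $\Bb=\Ab^\top\Ab$ once the scheme's threshold is met. In round two the master shares the whole of $\Bb$ with all $n$ workers (necessary, by the lower bound of~\ref{enc_A_subs} applied with $\Bb$ in place of $\Ab$), each worker runs the inner loop of Algorithm~\ref{pinv_alg} on its assigned column-blocks of $\widehat{\Bb^{-1}}$, right-multiplies by the matching blocks of $\Ab^\top$ to form the corresponding blocks of $\widehat{\Ab^{\dagger}}=\widehat{\Bb^{-1}}\Ab^\top$, and applies to them the linear encoding $\Gbt$ induced by its row of $\Gb$. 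Decoding is verbatim that of Theorem~\ref{MDS_CC_thm} with $\widehat{\Ab^{\dagger}}$ in the role of $\widehat{\Ab^{-1}}$: the messages are images of the same linear map $\Gbt$, so $\Dbt_\I\Gbt_\I$ collapses to the identity and the master recovers $\widehat{\Ab^{\dagger}}$ from any $k$ responders. This uses two rounds; the only inflation is that each worker now transmits an encoding of $\widehat{\Bb^{-1}}\Ab^\top\in\R^{M\times N}$ rather than of $\widehat{\Bb^{-1}}\in\R^{M\times M}$, which is the stated ``higher communication load''.

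\emph{Three-round mode.} Here that expensive round is replaced by two cheap polynomial-CMM rounds of~\cite{YMAA17} flanking the coded inversion scheme, giving a distributed realisation of Algorithm~\ref{pinv_alg}. Set $\kbar=\sqrt{k}$ and partition $\Ab$ column-wise into $\kbar$ blocks as in~\eqref{parts_A}; give worker $i$ the encodings $\Abt^a_i=\sum_j\Ab_j\gamma_i^{(j-1)a}$ and $\Abt^b_i=\sum_j\Ab_j\gamma_i^{(j-1)b}$ with the standard pair $a=1,\ b=\kbar$. In round one worker $i$ returns $(\Abt^a_i)^\top\Abt^b_i=\sum_{j,l}\Ab_j^\top\Ab_l\,\gamma_i^{(j-1)+(l-1)\kbar}$; since $(j,l)\mapsto(j-1)+(l-1)\kbar$ is a bijection of $\N_{\kbar}\times\N_{\kbar}$ onto $\{0,\dots,k-1\}$, this is a degree-$(k-1)$ matrix polynomial in $\gamma_i$ whose coefficients are exactly the $\kbar\times\kbar$ grid of blocks $\Ab_j^\top\Ab_l$ of $\Bb$, so any $k$ evaluations interpolate to $\Bb$ — the same threshold $k$ as the other rounds. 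Round two applies Theorem~\ref{MDS_CC_thm} to $\Bb\in\GL_M(\R)$ (the master re-shares $\Bb$, again via~\ref{enc_A_subs}): the workers run Algorithm~\ref{inv_alg}, encode the blocks $\Bcalb_j$ of~\eqref{parts_Binv} with $\Gb$, and the master recovers $\widehat{\Bb^{-1}}$ from any $k$ of them. In round three the master re-encodes $\widehat{\Bb^{-1}}$ on the \emph{same} nodes $\gamma_i$ as $\Bbt^a_i=\sum_j\Bcalb_j\gamma_i^{(j-1)a}$, and each worker — reusing the $\Abt^b_i$ it cached in round one — returns the corresponding product; the same interpolation identity lets the master assemble $\widehat{\Bb^{-1}}\Ab^\top=\widehat{\Ab^{\dagger}}$, as required by Algorithm~\ref{pinv_alg}. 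Straggler resilience in all rounds is the common threshold $k=n-s$, and ``reduced communication load and minimal computation'' come from the optimality of~\cite{YMAA17}: each worker sends one product of two encoded matrices and performs no encoding.

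\emph{Where the care goes.} The reductions to Theorem~\ref{MDS_CC_thm} and to CMM correctness are routine once interfaces match; the main obstacle is the bookkeeping in the three-round mode. One must pick the partition widths and exponents so that (i) \emph{both} polynomial-CMM instances have recovery threshold exactly $\kbar^2=k$, which forces the $\sqrt{k}$-way split of $\Ab$ and a compatible $\sqrt{k}$-way grouping of the blocks of $\widehat{\Bb^{-1}}$ before the last multiplication; (ii) the round-one encoding $\Abt^b_i$ remains usable in round three, i.e. the second CMM runs on the same nodes with an exponent layout whose coefficient slots do not collide with the cached blocks; and (iii) the transpose convention in the second CMM is arranged so the decoded product is $\widehat{\Bb^{-1}}\Ab^\top$ and not $\widehat{\Bb^{-1}}^{\,\top}\Ab^\top$ — these differ because the least-squares estimate of $\Bb^{-1}$, built row- or column-wise, need not be exactly symmetric even though $\Bb^{-1}$ is. Finally one notes — immediately — that the only approximation error is the one from the least-squares solves inside Algorithm~\ref{inv_alg}, and it is carried through untouched, since everything afterwards (the CMM encodings/decodings and the $\brs$ encode/decode) is exact linear algebra applied to $\widehat{\Bb^{-1}}$. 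Thus the theorem reduces to lining up three recovery thresholds, one cached encoding, and one transpose convention; once these agree, correctness in each round is inherited from the cited building block.
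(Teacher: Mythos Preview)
Your proposal is correct and follows the paper's own argument: the paper gives no separate proof of Theorem~\ref{MDS_CC_psinv_thm} in the appendix, and the ``proof'' is precisely the construction laid out in \S\ref{distr_Pseudinv_subsec}, which you have faithfully expanded and verified round by round. Your write-up is in fact more careful than the paper's --- you make explicit the exponent pair $(a,b)=(1,\kbar)$ and the bijection $(j,l)\mapsto(j-1)+(l-1)\kbar$ underlying the polynomial CMM, you flag that the $k$-way partition of $\widehat{\Bb^{-1}}$ used in round two must be regrouped into a $\kbar$-way partition before round three, and you note the transpose subtlety (that $\widehat{\Bb^{-1}}$ need not be symmetric even though $\Bb^{-1}$ is); the paper glosses over all of these and in places has mismatched block counts ($k$ versus $\kbar$) that your bookkeeping silently repairs.
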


\section{Conclusion and Future Work}
\label{concl_sec}
\vspace{-1mm}

In this paper, we addressed the problem of computing the inverse and pseudoinverse of a matrix distributively, under the presence stragglers. Due to inherent limitations of inverting matrices, we settled for an approximation. We first gave two algorithms which respectively estimate the columns and rows of $\Ab^{-1}$ and $\Ab^{\dagger}$.

The main contribution of this work, is showing how generator matrices of sparsest-balanced MDS codes can be utilized, to devise \textit{coded matrix inversion} schemes. We worked with generator matrices of $\brs$ codes, which enables faster online decoding. A similar approach can be used to devise CMM schemes \cite{CMH21}. Furthermore, we also showed how the information can be securely transmitted between the main server and the workers, and vice versa, which is another current interest in the area of CC.

There are several interesting directions for future work. One could look into the issue of numerical stability of our $\brs$ approach, as well as if other suitable generator matrices exist. Regarding Algorithms \ref{inv_alg} and \ref{pinv_alg}, we did not establish approximation error bounds in this paper. In terms of coding theory, it would be interesting to see if it is possible to reduce the complexity of our decoding step. Specifically, could well-known $\rs$ decoding algorithms such as the Berlekamp-Welch algorithm be exploited? Another important problem is to \textit{efficiently} secure the information from the workers.

\vspace{-1mm}


\bibliographystyle{IEEEtran}
\bibliography{refs}

\begin{thebibliography}{10}
\providecommand{\url}[1]{#1}
\csname url@samestyle\endcsname
\providecommand{\newblock}{\relax}
\providecommand{\bibinfo}[2]{#2}
\providecommand{\BIBentrySTDinterwordspacing}{\spaceskip=0pt\relax}
\providecommand{\BIBentryALTinterwordstretchfactor}{4}
\providecommand{\BIBentryALTinterwordspacing}{\spaceskip=\fontdimen2\font plus
\BIBentryALTinterwordstretchfactor\fontdimen3\font minus
  \fontdimen4\font\relax}
\providecommand{\BIBforeignlanguage}[2]{{%
\expandafter\ifx\csname l@#1\endcsname\relax
\typeout{** WARNING: IEEEtran.bst: No hyphenation pattern has been}%
\typeout{** loaded for the language `#1'. Using the pattern for}%
\typeout{** the default language instead.}%
\else
\language=\csname l@#1\endcsname
\fi
#2}}
\providecommand{\BIBdecl}{\relax}
\BIBdecl

\bibitem{CPH20b}
N.~Charalambides, M.~Pilanci, and A.~O. Hero~III, ``{Straggler Robust
  Distributed Matrix Inverse Approximation},'' \emph{arXiv preprint
  arXiv:2003.02948}, 2020.

\bibitem{CPH22a}
------, ``{Secure Linear MDS Coded Matrix Inversion},'' \emph{arXiv preprint
  arxiv:2207.06271}, 2022.

\bibitem{GS59}
B.~G. Greenberg and A.~E. Sarhan, ``Matrix inversion, its interest and
  application in analysis of data,'' \emph{Journal of the American Statistical
  Association}, vol.~54, no. 288, pp. 755--766, 1959.

\bibitem{Hig02}
N.~J. Higham, \emph{Accuracy and Stability of Numerical Algorithms},
  2nd~ed.\hskip 1em plus 0.5em minus 0.4em\relax USA: Society for Industrial
  and Applied Mathematics, 2002.

\bibitem{Str69}
V.~Strassen, ``Gaussian elimination is not optimal,'' \emph{Numerische
  mathematik}, vol.~13, no.~4, pp. 354--356, 1969.

\bibitem{TB97}
L.~N. Trefethen and D.~Bau~III, \emph{Numerical linear algebra}.\hskip 1em plus
  0.5em minus 0.4em\relax Siam, 1997, vol.~50.

\bibitem{DRSL16}
T.~A. Davis, S.~Rajamanickam, and W.~M. Sid-Lakhdar, ``A survey of direct
  methods for sparse linear systems,'' \emph{Acta Numerica}, vol.~25, pp.
  383--566, 2016.

\bibitem{PV20}
R.~Peng and S.~Vempala, ``Solving sparse linear systems faster than matrix
  multiplication,'' \emph{arXiv preprint arXiv:2007.10254}, 2020.

\bibitem{AV20}
J.~Alman and V.~V. Williams, ``A refined laser method and faster matrix
  multiplication,'' \emph{arXiv preprint arXiv:2010.05846}, 2020.

\bibitem{LLPPR17}
K.~Lee, M.~Lam, R.~Pedarsani, D.~Papailiopoulos, and K.~Ramchandran, ``Speeding
  up distributed machine learning using codes,'' \emph{IEEE Transactions on
  Information Theory}, vol.~64, no.~3, pp. 1514--1529, 2017.

\bibitem{YSRKSA18}
Q.~Yu, S.~Li, N.~Raviv, S.~M.~M. Kalan, M.~Soltanolkotabi, and S.~Avestimehr,
  ``{Lagrange coded computing: Optimal design for resiliency, security and
  privacy},'' \emph{arXiv preprint arXiv:1806.00939}, 2018.

\bibitem{LA20}
S.~Li and S.~Avestimehr, ``Coded computing,'' \emph{Foundations and
  Trends{\textregistered} in Communications and Information Theory}, vol.~17,
  no.~1, 2020.

\bibitem{YMAA17}
Q.~Yu, M.~Maddah-Ali, and S.~Avestimehr, ``Polynomial codes: an optimal design
  for high-dimensional coded matrix multiplication,'' in \emph{Advances in
  Neural Information Processing Systems}, 2017, pp. 4403--4413.

\bibitem{HASH17}
W.~Halbawi, N.~Azizan, F.~Salehi, and B.~Hassibi, ``{Improving Distributed
  Gradient Descent Using {R}eed-{S}olomon Codes},'' in \emph{2018 IEEE
  International Symposium on Information Theory (ISIT)}.\hskip 1em plus 0.5em
  minus 0.4em\relax IEEE, 2018, pp. 2027--2031.

\bibitem{TLDK17}
R.~Tandon, Q.~Lei, A.~G. Dimakis, and N.~Karampatziakis, ``Gradient coding:
  Avoiding stragglers in distributed learning,'' in \emph{International
  Conference on Machine Learning}, 2017, pp. 3368--3376.

\bibitem{HLH16}
W.~Halbawi, Z.~Liu, and B.~Hassibi, ``Balanced {R}eed-{S}olomon {C}odes,'' in
  \emph{2016 IEEE International Symposium on Information Theory (ISIT)}.\hskip
  1em plus 0.5em minus 0.4em\relax IEEE, 2016, pp. 935--939.

\bibitem{HLH16b}
------, ``Balanced {R}eed-{S}olomon {C}odes for all parameters,'' in \emph{2016
  IEEE Information Theory Workshop (ITW)}.\hskip 1em plus 0.5em minus
  0.4em\relax IEEE, 2016, pp. 409--413.

\bibitem{GHSTM11}
R.~Gemulla, E.~Nijkamp, P.~J. Haas, and Y.~Sismanis, ``Large-scale matrix
  factorization with distributed stochastic gradient descent,'' in
  \emph{Proceedings of the 17th ACM SIGKDD international conference on
  Knowledge discovery and data mining}, 2011, pp. 69--77.

\bibitem{YGK17}
Y.~Yang, P.~Grover, and S.~Kar, ``Coded distributed computing for inverse
  problems,'' in \emph{Advances in Neural Information Processing Systems},
  vol.~30.\hskip 1em plus 0.5em minus 0.4em\relax Curran Associates, Inc.,
  2017, pp. 709--719.

\bibitem{RS60}
\BIBentryALTinterwordspacing
I.S.Reed and G.Solomon, ``{Polynomial Codes Over Certain Finite Fields},''
  \emph{Journal of the Society for Industrial and Applied Mathematics}, vol.~8,
  no.~2, pp. 300--304, 1960. [Online]. Available:
  \url{http://www.jstor.org/stable/2098968}
\BIBentrySTDinterwordspacing

\bibitem{FC19}
M.~Fahim and V.~R. Cadambe, ``{Numerically Stable Polynomially Coded
  Computing},'' in \emph{2019 IEEE International Symposium on Information
  Theory (ISIT)}.\hskip 1em plus 0.5em minus 0.4em\relax IEEE, 2019, pp.
  3017--3021.

\bibitem{FJHDCG17}
M.~Fahim, H.~Jeong, F.~Haddadpour, S.~Dutta, V.~Cadambe, and P.~Grover, ``On
  the optimal recovery threshold of coded matrix multiplication,'' in
  \emph{2017 55th Annual Allerton Conference on Communication, Control, and
  Computing (Allerton)}.\hskip 1em plus 0.5em minus 0.4em\relax IEEE, 2017, pp.
  1264--1270.

\bibitem{DFHJCG19}
S.~Dutta, M.~Fahim, F.~Haddadpour, H.~Jeong, V.~Cadambe, and P.~Grover, ``On
  the optimal recovery threshold of coded matrix multiplication,'' \emph{IEEE
  Transactions on Information Theory}, vol.~66, no.~1, pp. 278--301, 2019.

\bibitem{YA20}
Q.~Yu and A.~S. Avestimehr, ``{Entangled Polynomial Codes for Secure, Private,
  and Batch Distributed Matrix Multiplication: Breaking the ``Cubic''
  Barrier},'' \emph{arXiv preprint arXiv:2001.05101}, 2020.

\bibitem{KD22}
S.~Kiani and S.~C. Draper, ``{Successive Approximation Coding for Distributed
  Matrix Multiplication},'' \emph{arXiv preprint arXiv:2201.03486}, 2022.

\bibitem{CPH20a}
N.~Charalambides, M.~Pilanci, and A.~O. Hero, ``{Weighted Gradient Coding with
  Leverage Score Sampling},'' in \emph{ICASSP 2020-2020 IEEE International
  Conference on Acoustics, Speech and Signal Processing (ICASSP)}.\hskip 1em
  plus 0.5em minus 0.4em\relax IEEE, 2020, pp. 5215--5219.

\bibitem{SMA21}
M.~Soleymani, H.~Mahdavifar, and A.~S. Avestimehr, ``{Analog Lagrange Coded
  Computing},'' \emph{IEEE Journal on Selected Areas in Information Theory},
  vol.~2, no.~1, pp. 283--295, 2021.

\bibitem{BV04}
S.~P. Boyd and L.~Vandenberghe, \emph{Convex optimization}.\hskip 1em plus
  0.5em minus 0.4em\relax Cambridge university press, 2004.

\bibitem{She94}
J.~R. Shewchuk, ``{An Introduction to the Conjugate Gradient Method Without the
  Agonizing Pain},'' 1994.

\bibitem{ColNotes}
N.~Atkinson, ``Notes on the sensitivity of linear systems.''

\bibitem{CT06}
T.~M. Cover and J.~A. Thomas, \emph{Elements of Information Theory (Wiley
  Series in Telecommunications and Signal Processing)}.\hskip 1em plus 0.5em
  minus 0.4em\relax USA: Wiley-Interscience, 2006.

\bibitem{Sha79}
A.~Shamir, ``{How to Share a Secret},'' \emph{Communications of the ACM},
  vol.~22, no.~11, pp. 612--613, 1979.

\bibitem{DSDY13}
S.~H. Dau, W.~Song, Z.~Dong, and C.~Yuen, ``{Balanced Sparsest Generator
  Matrices for MDS Codes},'' in \emph{2013 IEEE International Symposium on
  Information Theory}, 2013, pp. 1889--1893.

\bibitem{Kra96}
M.~Krause, ``{A Simple Proof of the Gale-Ryser Theorem},'' \emph{The American
  Mathematical Monthly}, vol. 103, no.~4, pp. 335--337, 1996.

\bibitem{Mc01}
R.~J. McEliece, \emph{Theory of Information and Coding}, 2nd~ed.\hskip 1em plus
  0.5em minus 0.4em\relax USA: Cambridge University Press, 2001.

\bibitem{LX04}
S.~Ling and C.~Xing, \emph{{Coding Theory: A First Course}}.\hskip 1em plus
  0.5em minus 0.4em\relax Cambridge University Press, 2004.

\bibitem{RTTD17}
N.~Raviv, I.~Tamo, R.~Tandon, and A.~G. Dimakis, ``{Gradient Coding from Cyclic
  MDS Codes and Expander Graphs},'' \emph{IEEE Transactions on Information
  Theory}, vol.~66, no.~12, pp. 7475--7489, 2020.

\bibitem{CMH21}
N.~Charalambides, H.~Mahdavifar, and A.~O. Hero~III, ``Numerically stable
  binary coded computations,'' \emph{arXiv preprint arXiv:2109.10484}, 2021.

\bibitem{BP70}
{\AA}.~Bj{\"o}rck and V.~Pereyra, ``{Solution of Vandermonde Systems of
  Equations},'' \emph{Mathematics of Computation}, vol.~24, pp. 893--903, 1970.

\end{thebibliography}


\section*{Appendix 1 --- Proofs of Section \ref{Str_pr_sec}}

In this appendix, we include the missing proofs of Section \ref{Str_pr_sec}. We first recall two well-know results, which will be used.

\begin{Thm}[MDS Theorem --- \cite{LX04}]
\label{MDS_thm}
\textit{Let $\Cc$ be a linear $[n,k,d]$ code over $\F_q$, with $\Gb,\Hb$ the generator and parity-check matrices. Then, the following are equivalent}:
\begin{enumerate}
  \item \textit{$\Cc$ is a MDS code, i.e. $d=n-k+1$}
  \item \textit{every set of $n-k$ columns of $\Hb$ is linearly independent}
  \item \textit{every set of $k$ columns of $\Gb$ is linearly independent}
  \item \textit{$\Cc^{\perp}$ is a MDS code}.\\
\end{enumerate}
\end{Thm}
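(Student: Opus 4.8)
The plan is to prove the four conditions equivalent by first settling $(1)\Leftrightarrow(2)$ through the standard parity-check characterization of minimum distance together with the Singleton bound, then settling $(1)\Leftrightarrow(3)$ by a direct argument with $\Gb$, and finally deriving $(4)$ from $(3)$ by applying the already-established $(1)\Leftrightarrow(2)$ equivalence to the dual code $\Cc^{\perp}$.

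For $(1)\Leftrightarrow(2)$, I would invoke the well-known fact that for a linear $[n,k,d]$ code the minimum distance $d$ is the smallest number of linearly dependent columns of the parity-check matrix $\Hb$; equivalently, $d\geqslant \delta$ if and only if every $\delta-1$ columns of $\Hb$ are linearly independent. Since $\Hb\in\F_q^{(n-k)\times n}$, no more than $n-k$ of its columns can be linearly independent, which is precisely the Singleton bound $d\leqslant n-k+1$. Hence $d=n-k+1$ holds exactly when every set of $n-k=d-1$ columns of $\Hb$ is linearly independent.

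For $(1)\Leftrightarrow(3)$, I would argue with $\Gb\in\F_q^{k\times n}$ directly. If some $k$ columns of $\Gb$, say those indexed by a set $S$, are linearly dependent, then the $k\times k$ submatrix they form is singular, so there is a nonzero $\vb\in\F_q^k$ annihilating those columns on the left; the codeword $\vb^\top\Gb$ is then nonzero (as $\Gb$ has full row rank) and supported off $S$, hence of weight $\leqslant n-k$, so $d<n-k+1$. Conversely, if every $k$ columns of $\Gb$ are independent, no nonzero codeword $\vb^\top\Gb$ can vanish on a coordinate set of size $\geqslant k$, so every nonzero codeword has weight $\geqslant n-k+1$, and combined with Singleton this forces $d=n-k+1$.

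For $(4)$, I would use that $\Gb$ is a parity-check matrix for $\Cc^{\perp}$, which is an $[n,n-k,d^{\perp}]$ code. Applying the equivalence $(1)\Leftrightarrow(2)$ to $\Cc^{\perp}$: it is MDS, i.e.\ $d^{\perp}=n-(n-k)+1=k+1$, if and only if every set of $n-(n-k)=k$ columns of its parity-check matrix $\Gb$ is linearly independent --- which is exactly statement $(3)$. This closes the loop $(1)\Leftrightarrow(2)\Leftrightarrow(3)\Leftrightarrow(4)$. The only real care needed is dimensional bookkeeping --- ensuring the ``$n-k$'' and ``$k$'' counts match the row dimensions of $\Hb$ and $\Gb$ so the maximal-independent-set sizes come out right --- together with citing rather than reproving the minimum-distance/parity-check lemma and the Singleton bound, both classical; there is no substantive obstacle here, the content being the clean organization of these equivalences.
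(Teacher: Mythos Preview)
Your argument is correct and is the standard textbook proof of this classical equivalence. Note, however, that the paper does not actually prove this statement: it is quoted in the appendix as a well-known result from \cite{LX04} and used as a black box (in the proofs of Theorem~\ref{MDS_CC_thm} and Lemma~\ref{lem_sol_opt_prob}), so there is no ``paper's own proof'' to compare against. Your write-up would serve perfectly well as the omitted justification.
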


\begin{Thm}[BCH Bound --- \cite{HLH16},\cite{Mc01}]
\label{BCH_bd}
Let $p(\x)\in\F_q[\x]\backslash\{0\}$ with $t$ cyclically consecutive roots, i.e. $p(\alpha^{j+\iota})=0$ for all $\iota\in\N_t$. Then, at least $t+1$ coefficients of $p(\x)$ are nonzero.
\end{Thm}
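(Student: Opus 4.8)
The plan is to prove this in the standard way: by contradiction, reducing to the nonsingularity of a Vandermonde matrix. First I would suppose that $p(\x)$ has at most $t$ nonzero coefficients and write $p(\x)=\sum_{m=1}^{s}c_m\x^{i_m}$ with $s\le t$, all $c_m\in\F_q^{\times}$, and $0\le i_1<i_2<\cdots<i_s$. Since $\alpha$ is a primitive root of $\F_q$ and (as in the applications of the theorem in the paper) $\deg p<q-1$, the exponents $i_m$ are distinct modulo $\mathrm{ord}(\alpha)=q-1$, so the elements $x_m:=\alpha^{i_m}\in\F_q^{\times}$ are pairwise distinct.

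Next I would convert the hypothesis into a homogeneous linear system. For each $\iota\in\N_t$ the condition $p(\alpha^{j+\iota})=0$ reads $\sum_{m=1}^{s}c_m\alpha^{(j+\iota)i_m}=0$, that is $\sum_{m=1}^{s}\big(c_m\alpha^{ji_m}\big)x_m^{\iota}=0$. Restricting to $\iota=1,\dots,s$ (legitimate since $s\le t$) and setting $d_m:=c_m\alpha^{ji_m}x_m\in\F_q^{\times}$, this becomes $\sum_{m=1}^{s}d_m x_m^{\iota-1}=0$ for $\iota=1,\dots,s$, i.e. $V\mathbf{d}=\mathbf{0}$ with $V=\big(x_m^{\iota-1}\big)_{1\le\iota,m\le s}$ and $\mathbf{d}=(d_1,\dots,d_s)^{\top}$.

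Then I would invoke the Vandermonde determinant formula $\det V=\prod_{1\le m<m'\le s}(x_{m'}-x_m)\ne0$, valid because the $x_m$ are distinct, to conclude $\mathbf{d}=\mathbf{0}$, contradicting $d_m\ne0$ for all $m$. Hence $p(\x)$ must have at least $t+1$ nonzero coefficients, which is the claim.

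The argument is essentially routine; the one point that needs care is the distinctness of the $x_m=\alpha^{i_m}$, i.e. that the exponents occurring in $p$ are distinct modulo $\mathrm{ord}(\alpha)=q-1$. In the paper's uses of the theorem this is immediate, since $\alpha$ is primitive and the polynomials $p_j(\x)$ of \eqref{lagr_polys} have degree $k-1<q-1$; for a fully general statement one first passes to the remainder of $p$ modulo $\x^{q-1}-1$, which preserves the values $p(\gamma)$ for $\gamma\in\F_q^{\times}$ and does not increase the number of nonzero coefficients. This is the only mild obstacle I foresee.
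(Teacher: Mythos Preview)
The paper does not actually prove this theorem: it is stated in Appendix~1 as a known result, with references to \cite{HLH16} and \cite{Mc01}, and is then invoked in the proof of Lemma~\ref{inverse_lem}. So there is no ``paper's own proof'' to compare against.

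That said, your argument is the standard and correct one: assume at most $t$ nonzero coefficients, plug in the $t$ consecutive roots, and obtain a homogeneous $s\times s$ Vandermonde system in the (nonzero) scaled coefficients $d_m$, which forces $d_m=0$ --- a contradiction. Your handling of the one genuine subtlety (distinctness of the $x_m=\alpha^{i_m}$) is appropriate: in the paper's application $\deg p_j=k-1<q-1$ so the exponents are already distinct modulo $q-1$, and for the general case passing to the remainder modulo $\x^{q-1}-1$ preserves the values on $\F_q^{\times}$ and can only merge terms, hence does not increase the number of nonzero coefficients. The only residual edge case is when that remainder is identically zero; then $p$ vanishes on all of $\F_q^{\times}$, so $t\le q-1$ while $\x^{q-1}-1\mid p$ forces at least $q$ nonzero coefficients (e.g.\ by the same Vandermonde argument applied to the factor, or directly since any nonzero multiple of $\x^{q-1}-1$ has weight $\ge q$), and the conclusion $t+1\le q$ still holds. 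You may want to add a line to cover this, but otherwise the proof is complete.
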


\begin{proof}{[Proposition \ref{prop_sec_cryptosystem}]}
Assume for a contradiction that an adversary was able to reverse the encoding of $f(\x)$ for each block. This implies that he or she was able to reveal $\beta$ and $\Hh^{-1}$. The only way to reveal these elements, is if the adversary was able to both intercept and decipher the public-key cryptosystem used by the master, which contradicts the security of the cryptosystem.
\end{proof}

\begin{proof}{[Lemma \ref{inverse_lem}]}
The matrices $\Hb$ and $\Pb$ are of size $n\times k$ and $k\times k$ respectively. The restricted matrix $\Gb_{\I}$ is then equal to $\Hb_{\I}\Pb$, where $\Hb_{\I}\in\F_q^{k\times k}$ is now a square Vandermonde matrix, which is invertible in $\ow(k^2)$ time \cite{BP70}. Specifically
$$ \Hb_{\I} = \begin{pmatrix} 1 & \beta_{\I_1} & \beta_{\I_1}^2 & \hdots & \beta_{\I_1}^{k-1} \\ 1 & \beta_{\I_2} & \beta_{\I_2}^2 & \hdots & \beta_{\I_2}^{k-1} \\ \vdots & \vdots & \vdots & \ddots & \vdots \\ 1 & \beta_{\I_k} & \beta_{\I_k}^2 & \hdots & \beta_{\I_k}^{k-1} \end{pmatrix} \in \F_q^{k\times k} . $$
It follows that
$$ \text{det}(\Hb_{\I}) = \left(\prod\limits_{\{i<j\}\subseteq\I}(\beta_j-\beta_i)\right) $$
which is nonzero, since $\beta$ is primitive. Therefore, $\Hb_\I$ is invertible. By \cite[Lemma 1]{HLH16} and the BCH bound, we conclude that $\Pb$ is also invertible. Hence, $\Gb_\I$ is invertible for any set $\I$.

Note that the inversion of $\Pb$ can computed a priori by the master before we deploy our CCS. Therefore, computing $\Gb_\I^{-1}$ online with knowledge of $\Pb^{-1}$, requires an inversion of $\Hb_\I$ which takes $\ow(k^2)$; and then multiplying it by $\Pb^{-1}$. Thus, it requires $\ow(k^2+k^{\omega})$ operations.
\end{proof}

\begin{proof}{[Theorem \ref{MDS_CC_thm}]}
The encoding vectors applied locally by each of the $n$ workers correspond to a row of $\Gb$. The encoding by all the workers then corresponds to $\Gbt\cdot(\widehat{\Ab^{-1}})^\top$, for $\Gbt=\Ib_T\otimes\Gb$, as in \eqref{enc_identity}. Consider any set of responsive workers $\I$ of size $k$, whose encodings comprise $\Gbt_\I\cdot(\widehat{\Ab^{-1}})^\top$. By Theorem \ref{MDS_thm}, $\Gb_\I$ is invertible. Hence, the decoding step reduces to inverting $\Gb_\I$, which corresponds to $\Dbt_\I=\Ib_T\otimes(\Gb_\I)^{-1}$, and is performed online.
\end{proof}

\begin{proof}{[Lemma \ref{lem_sol_opt_prob}]}
The first two constraints are satisfied by the definition of $\Gb$, which meets the sparsest and balanced constraints with equality; for the given parameters. The last constraint is implied by 3) of Theorem \ref{MDS_thm}.

Additionally, the first two constraints of \eqref{min_G_problem} imply that $\text{nnzr}(\Gb)\geqslant\max\{nw,kd\}$, and for our parameters we have $nw=kd$. This is met with equality for the chosen $\Gb$, as
\begin{align*}
  \text{nnzr}(\Gb) &=\sum_{j\in\N_k}\text{nnzr}(\Gb^{(j)})\\
  &=\sum_{j\in\N_k}\#\big\{p_j(\beta_i)\neq0 : \beta_i\in\Bcal\big\}\\
  &=\sum_{j\in\N_k}n-\big\{i : \Mb_{ij}=0\big\}\\
  &=\sum_{j\in\N_k}n-(n-d)\\
  &=kd
\end{align*}
and the proof is complete.
\end{proof}

\begin{proof}{[Proposition \ref{cyclic_MDS_prop}]}
Consider a cyclic $[n,n-s]$ MDS code $\Cc$ over $\F\in\{\R,\C\}$. Recall that from our assumptions, we have $s=n-k$. By \cite[Lemma 8]{RTTD17}, there exists a codeword $\gb_1\in\Cc$ of support $d=s+1$, i.e. $\|\gb_1\|_0=d$. Since $\Cc$ is cyclic, it follows that the cyclic shifts of $\gb_1$ also lie in $\Cc$. Denote the $n-1$ consecutive cyclic shifts of $\gb_1$ by $\{\gb_i\}_{i=2}^n\subsetneq\Cc\subsetneq\F^{1\times n}$, which are all distinct. Define the cyclic matrix
\begin{equation}
  \Gbb \coloneqq {\begin{pmatrix} | & | & & | \\ \gb_1^\top & \gb_2^\top & \hdots & \gb_n^\top \\ | & | & & | \end{pmatrix}} \in \F^{n\times n}\ .
\end{equation}

Since $\|\gb_i\|_0=d$ and $\gb_i$ is a cyclic shift of $\gb_{i-1}$ for all $i\in\N_n$, it follows that $\|\Gbb_{(i)}\|_0=\|\Gbb_{(j)}\|_0=d$ for all $i,j\in\N_n$, i.e. $\Gbb$ is sparsest and balanced. If we erase \textit{any} $s=n-k$ columns of $\Gbb$, we get $\Gb\in\F^{n\times k}$. By erasing arbitrary columns of $\Gbb$, the resulting $\Gb$ is \textit{not} balanced\footnote{Recall that for conventional reasons we use the transpose of sparsest-balanced generator matrices, hence the \textit{balanced} condition is considered for the \textit{rows} of $\Gb$; rather than its columns.}, i.e. we have $\|\Gb_{(i)}\|_0\neq\|\Gb_{(j)}\|_0$ for some pairs $i,j\in\N_n$. Similar to the case we considered for $\brs$ generator matrices, we define the encoding matrix to be $\Gbt=\Ib_T\otimes\Gb$. The encodings are analogous to \eqref{enc_identity}.

Consider an arbitrary set of $k$ non-straggling workers $\I\subsetneq\N_n$, and the corresponding matrix $\Gb_\I\in\F^{k\times k}$. By \cite[Lemma 12, B4.]{RTTD17}, $\Gb_\I$ is invertible. The decoding matrix is then $\Dbt_\I=\Ib_T\otimes(\Gb_\I)^{-1}$, and the condition $\Dbt_\I\Gbt=\Ib_N$ is met.
\end{proof}

Next, we give a short derivation to the fact that $\tau=\ceil{\frac{n}{n-d}}\leqslant k$. In order to have a meaningful scheme, we require that $k-1\geqslant w$, otherwise every worker is assigned all computational tasks, thus everyone is requested to compute all columns of $\widehat{\Ab^{-1}}$, and a CCS is not necessary. Therefore
$$ 1-\frac{1}{k}\geqslant\frac{w}{k}=\frac{d}{n} \ \implies \ \frac{n-d}{n}\geqslant\frac{1}{k} \ \implies \ \frac{n}{n-d}\leqslant k $$
and since $k\in\Z_+$, we have $\tau\leqslant k$.


\section*{Appendix 2 --- Gradient Coding Scheme of \cite{HASH17}, and a Numerical Example}
\label{RS_GC_app}

In this appendix, we give a brief overview of the GC scheme from \cite{HASH17}, to show how it differs from our coded matrix inversion scheme. We also explicitly give their construction of a balanced mask matrix $\Mb\in\{0,1\}^{n\times k}$, which we use for the construction of the $\brs$ generator matrices. We illustrate the proposed CCS in Figure \ref{CMIS_fig}, and the encoding and decoding procedures with a simple example.

\begin{figure}[h]
  \centering
    \includegraphics[scale=.17]{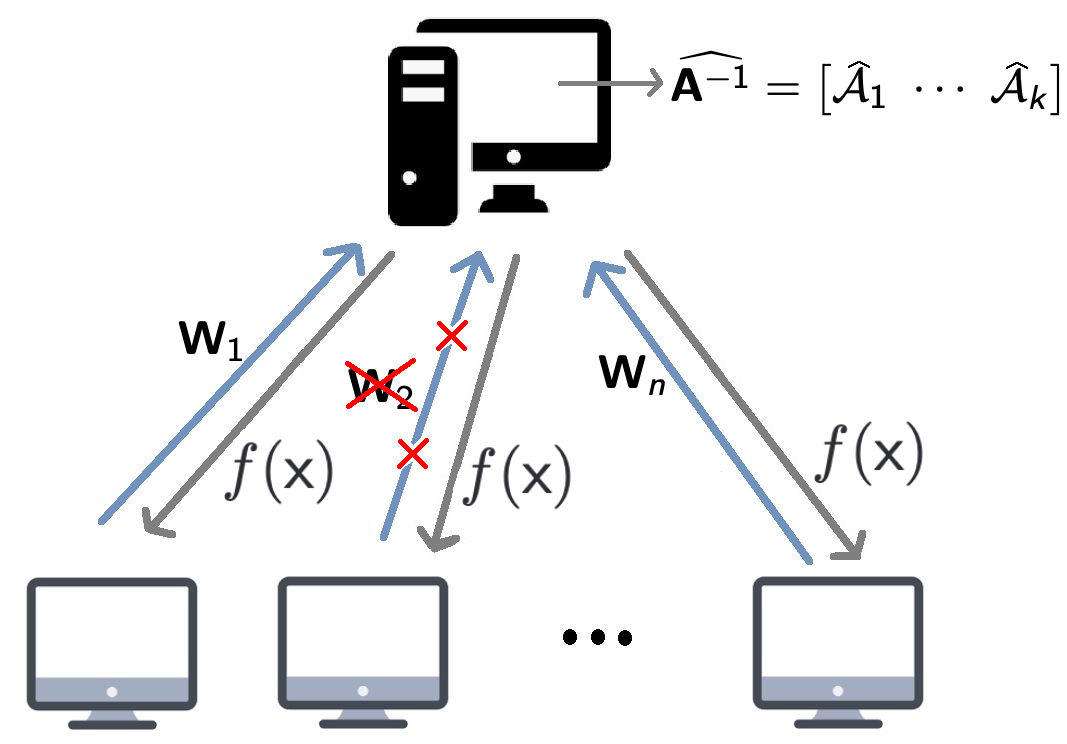}
    \caption{Algorithmic workflow of the coded matrix inversion scheme. The master shares $f(\x)$, an encoding analogous to \eqref{lagr_pol_matr}, along with $\beta,\{\eta_j^{-1}\}_{j=1}^k$. The workers then recover $\Ab$, compute their assigned tasks, and encode them according to $\Gb$. Once $k$ encodings $\Wb_\iota$ are sent back, $\widehat{\Ab^{-1}}$ can be recovered.}
  \label{CMIS_fig}
\end{figure}

\begin{algorithm}[h]
\label{RBMM}
\SetAlgoLined
\KwIn{$n,k,d\in\Z_+$ s.t. $n>d,k$ and $w=\frac{kd}{n}$}
\KwOut{row-balanced mask matrix $\Mb\in\{0,1\}^{n\times k}$}
  $\Mb \gets \bold{0}_{n\times k}$\\
  \For{$j = 0$ to $k-1$}
    {
    \For{$i = 0$ to $d-1$}
      {
        $\iota \gets (i+jd+1)\bmod n$\\
        $\Mb_{r,\iota} \gets 1$
      }
    }
 \Return $\Mb$
 \caption{MaskMatrix$(n,k,d)$ \cite{HASH17}}
\end{algorithm}

Even though this was not pointed out in \cite{HASH17}, Algorithm \ref{RBMM} does not always produce a mask matrix of the given parameters when we select $d<n/2$. This is why in our work we require $d\geqslant n/2$.

The decomposition $\Gb=\Hb\Pb$ is utilized in the GC scheme of \cite{HASH17}. Each column of $\Gb$ corresponds to a partition of the data whose partial gradient is to be computed. The polynomials are judiciously constructed in this scheme, such that the constant term of each polynomial is 1 for all polynomials, thus $\Pb_{(1)}=\vec{\bold{1}}$. By this, the decoding vector $\ab_{\I}^\top$ is the first row of $\Gb_{\I}^{-1}$, for which $\ab_{\I}^\top\Gb_{\I}=\bold{e}_1^\top$. A direct consequence of this is that $\ab_{\I}^\top\Bb_{\I}=\bold{e}_1^\top\Tb=\Tb_{(1)}=\vec{\bold{1}}$, which is the objective for constructing a GC scheme.

\subsection{Generator Matrix Example}

As an example, consider the case where $n=9$, $k=6$ and $d=6$, thus $w=\frac{kd}{n}=4$. Then, Algorithm \ref{RBMM} produces 
$$ \Mb = \begin{pmatrix} 1 & 1 & {\gray 0} & 1 & 1 & {\gray 0} \\ 1 & 1 & {\gray 0} & 1 & 1 & {\gray 0} \\ 1 & 1 & {\gray 0} & 1 & 1 & {\gray 0} \\ 1 & {\gray 0} & 1 & 1 & {\gray 0} & 1 \\ 1 & {\gray 0} & 1 & 1 & {\gray 0} & 1 \\ 1 & {\gray 0} & 1 & 1 & {\gray 0} & 1 \\ {\gray 0} & 1 & 1 & {\gray 0} & 1 & 1 \\ {\gray 0} & 1 & 1 & {\gray 0} & 1 & 1 \\ {\gray 0} & 1 & 1 & {\gray 0} & 1 & 1 \end{pmatrix} \ \in\{0,1\}^{9\times 6} \ . $$
For our CCS, this means that the $i^{th}$ worker computes the blocks indexed by $\text{supp}(\Mb_{(i)})$, e.g. $\text{supp}(\Mb_{(1)})=\{1,2,4,5\}$. We denote the indices of the respective task allocations by $\J_i=\text{supp}(\Mb_{(i)})$. The entries of the generator matrix $\Gb$ are the evaluations of the constructed polynomials \eqref{lagr_polys} at each of the  evaluation points $\Bcal=\{\beta_i\}_{i=1}^n$, i.e. $\Gb_{ij}=p_j(\alpha_i)$. This results in:
$$ \Gb = \begin{pmatrix} p_1({\beta_1}) & p_2({\beta_1}) & {\gray 0} & p_4({\beta_1}) & p_5({\beta_1}) & {\gray 0} \\ p_1({\beta_2}) & p_2({\beta_2}) & {\gray 0} & p_4({\beta_2}) & p_5({\beta_2}) & {\gray 0} \\ p_1({\beta_3}) & p_2({\beta_3}) & {\gray 0} & p_4({\beta_3}) & p_5({\beta_3}) & {\gray 0} \\  p_1({\beta_4}) & {\gray 0} & p_3({\beta_4}) & p_4({\beta_4}) & {\gray 0} & p_6({\beta_4}) \\  p_1({\beta_5}) & {\gray 0} & p_3({\beta_5}) & p_4({\beta_5}) & {\gray 0} & p_6({\beta_5}) \\  p_1({\beta_6}) & {\gray 0} & p_3({\beta_6}) & p_4({\beta_6}) & {\gray 0} & p_6({\beta_6}) \\ {\gray 0} & p_2({\beta_7}) & p_3({\beta_7}) & {\gray 0} & p_5({\beta_7}) & p_6({\beta_7}) \\ {\gray 0} & p_2({\beta_8}) & p_3({\beta_8}) & {\gray 0} & p_5({\beta_8}) & p_6({\beta_8}) \\ {\gray 0} & p_2({\beta_9}) & p_3({\beta_9}) & {\gray 0} & p_5({\beta_9}) & p_6({\beta_9}) \\ \end{pmatrix} . $$
\vspace{2mm}

\end{document}